\documentclass[preprint]{elsarticle}
\usepackage[latin1]{inputenc}
\usepackage[danish, english]{babel}
\usepackage[T1]{fontenc}
\usepackage{amsmath,amssymb}
\usepackage{algorithmicx}
\usepackage{algpseudocode}
\algrenewcommand\algorithmicindent{1.0em}%
\usepackage{comment}
\usepackage{graphicx}
\usepackage{calc}
\usepackage{tabularx}
\usepackage{url}
\usepackage[capitalise]{cleveref}
\usepackage{textcomp}

\selectlanguage{english}

\newtheorem{definition}{Definition}
\newtheorem{example}{Example}

\newtheorem{proof}{Proof}

\newtheorem{lemma}{Lemma}
\newtheorem{proposition}{Proposition}

\newcommand{\mc}{\mathcal}

\newcommand{\rhs}{{\sf rhs}}
\newcommand{\lhs}{{\sf lhs}}
\newcommand{\flhs}{{\sf lhsf}}
\newcommand{\Lhs}{{\sf Lhsf}}
\newcommand{\Rhs}{{\sf Rhs}}

\newcommand{\func}{{\sf func}}
\newcommand{\fmap}{{\cal T}}
\newcommand{\arity}{{\sf ar}}
\newcommand{\any}{\textit{any}}

\newcommand{\Q}{{\cal Q}}

\newcommand{\num}{\textit{num}}

\newcommand{\listlist}{\textit{listlist}}
\newcommand{\listtype}{\textit{list}}
\newcommand{\e}{\textit{e}}

\newcommand{\Term}{{\sf Term}}

\newcommand{\Lang}{L}

     {\end{tabular}\end{tt}\end{small}}
\def\anno#1{{\ooalign{\hfil\raise.07ex\hbox{\small{\rm #1}}\hfil%
        \crcr\mathhexbox20D}}}

\journal{JLAMP}
\bibliographystyle{elsarticle-num}

\begin{document}
\begin{frontmatter}
\title{Optimised determinisation and completion of finite tree automata}
\author[roskilde,imdea]{John~P.~Gallagher\corref{mycorrespondingauthor}}
\cortext[mycorrespondingauthor]{Corresponding author}
\ead{jpg@ruc.dk}
\author[itu]{Mai~Ajspur}
\author[melb]{Bishoksan~Kafle}

\address[roskilde]{Roskilde University, Denmark}
\address[imdea]{IMDEA Software Institute, Madrid, Spain}
\address[itu]{IT University of Copenhagen, Denmark}
\address[melb]{The University of Melbourne, Australia}

\begin{abstract}
Determinisation and completion of finite tree automata are important operations with applications in program analysis and verification.  However, the complexity of the classical procedures for determinisation and completion is high. They are not practical procedures for manipulating tree automata beyond very small ones.  In this paper we develop an algorithm for determinisation and completion of finite tree automata, whose worst-case complexity remains unchanged, but which performs far better than existing algorithms in practice. The critical aspect of the algorithm is that the transitions of the determinised (and possibly completed) automaton are generated in a potentially very compact form called product form, which can reduce the size of the representation dramatically. Furthermore, the representation can often be used directly when manipulating the determinised automaton. The paper contains an experimental evaluation of the algorithm on a large set of tree automata examples. 
\end{abstract}
\end{frontmatter}

%
% start input /Users/jpg/Research/Papers/DFTA-JLAMP/Revision/intro.tex
\section{Introduction}\label{intro}

A recognisable tree language is a possibly infinite set of trees that are accepted by a finite tree automaton (FTA).   FTAs and the corresponding recognisable languages have desirable properties such as closure under Boolean set operations, and decidability of membership and emptiness.  

In the paper we will give a brief overview of the relevant features of FTAs, but the main goal of the paper is to focus on two operations on FTAs, namely \emph{determinisation} and \emph{completion}.  These operations play a key role in the theory of FTAs, for example in showing that recognisable tree languages are closed under Boolean operations.  Potentially, they also play a practical role in systems that manipulate sets of terms, but their complexity has so far discouraged their widespread application.

In the paper we develop an optimised algorithm that performs determinisation and optionally completion, and analyse its properties. The most critical aspect of the optimisation is a compact representation of the set of transitions of the determinised automaton, called \emph{product transitions}. Experiments show that the algorithm performs well, though the worst case remains unchanged.  We also discuss applications of finite tree automata that exploit the determinisation algorithm. 

In Section \ref{prelim} the essentials, for our purposes, of finite tree automata are introduced, including the notion of product transitions.  The operations of determinisation and completion are defined.  Section \ref{algorithm} presents the optimised algorithm for determinising an FTA.  It is developed in a series of stages starting from the textbook algorithm for determinisation. 
In Section \ref{algorithm} it is shown how the algorithm can be optimised and output transitions in product form. The performance of the algorithm is analysed in Section \ref{complexity}. In Section \ref{completion} we discuss the combination of determinisation and completion of an FTA and show that the performance of the algorithm generating product form is as effective when generating a complete determinised automaton. Section \ref{experiments} reports on the performance of the algorithm on a large number of example tree automata. Section \ref{applications} discusses potential applications of the algorithm to problems in program analysis and verification, and also to  tree automata problems including checking inclusion and universality., Section \ref{related} contains a discussion of related work and finally in Section \ref{future} we summarise and discuss further work and applications.

 % end input /Users/jpg/Research/Papers/DFTA-JLAMP/Revision/intro.tex
 %
% start input /Users/jpg/Research/Papers/DFTA-JLAMP/Revision/prelim.tex
\section{Preliminaries}\label{prelim}

A {\em finite tree automaton}
(FTA) is a quadruple 
$\langle Q, Q_f, \Sigma, \Delta \rangle$,
where 
\begin{enumerate}
\item
$Q$ is a finite set called {\em states}, 
\item
$Q_f \subseteq Q$ is called the set of accepting (or final) states,
\item
$\Sigma$ is a set of  
function symbols (called the \emph{signature}) and 
\item
$\Delta$ is a set of 
{\em transitions}. 
\end{enumerate}
Each function symbol $f \in \Sigma$ has an arity $n \ge 0$, written $\arity(f)=n$.
Function symbols with arity 0 are called {\em constants}. $Q$ and $\Sigma$ are disjoint. 
$\Term({\Sigma})$ is the 
set of {\em ground terms} (also called {\em trees}) constructed from $\Sigma$ where $t \in \Term({\Sigma})$ iff $t \in \Sigma$ is a constant or $t = f(t_1,\ldots,t_n)$ where $\arity(f)=n$ and $t_1,\ldots,t_n \in  \Term({\Sigma})$.
Similarly $\Term(\Sigma \cup Q)$ is the set of terms/trees constructed from $\Sigma$ and $Q$, treating the elements of $Q$ as constants.
Each transition in $\Delta$ is of the form
$f(q_1,\ldots,q_n) \rightarrow q$, where
$\arity(f)=n$ and $q,q_1,\ldots,q_n \in Q$.

To define acceptance of a term by the FTA $\langle Q, Q_f, \Sigma, \Delta \rangle$ we first define a {\em context} for the FTA. A context is a term from $\Term(\Sigma \cup Q \cup \{ \bullet\})$ containing exactly one occurrence of $\bullet$ (which is a constant not in $\Sigma$ or $Q$).  Let $c$ be a context and $t  \in \Term(\Sigma \cup Q)$;  $c[t]$ denotes the term resulting from the replacement of $\bullet$ in $c$ by $t$.  A term $t \in \Term(\Sigma \cup Q)$ can be written as $c[t']$ if $t$ has a subterm $t'$, where $c$ is the context resulting from replacing that subtree by $\bullet$.

The binary relation $\Rightarrow$ represents one step of a (bottom-up) run for the FTA.  It is defined as follows; $c[l] \Rightarrow c[r]$ iff $c$ is a context and $l \rightarrow r \in \Delta$.   The reflexive, transitive closure of $\Rightarrow$ is denoted $\Rightarrow^*$.

A run for $t \in \Term(\Sigma)$ exists if $t \Rightarrow^* q$ where $q \in Q$.  The run is \emph{successful} if $q \in Q_f$ and in this case $t$ is {\em accepted} by the FTA.  There may be more than one state $q$ such that $t \Rightarrow^* q$  and hence FTAs are sometimes called NFTAs, where N stands for nondeterministic.
A tree automaton $R$ defines a set of terms, that is, a tree language,
denoted $L(R)$, as the set of all terms that it accepts. We also write $L(q)$ to be the set of terms $t$ such that $t \Rightarrow^* q$ in a given FTA.

\begin{definition}\label{bu-det}
An $FTA$ $\langle Q, Q_f, \Sigma, \Delta \rangle$ is called {\em bottom-up deterministic} if and only if $\Delta$ contains no two transitions with the same left hand side.  A bottom-up deterministic FTA is abbreviated as a DFTA.
\end{definition}
Runs of a DFTA are deterministic in the following sense;  for every context $c$ and term of form $c[t]$ there is at most one term $c[t']$ such that $c[t] \Rightarrow c[t']$.  It follows that for every $t \in \Term(\Sigma)$ there is at most one $q \in Q$ such that $t \Rightarrow^* q$. 
As far as expressiveness is concerned
we can limit our attention to DFTAs\footnote{We do not deal here with top-down deterministic FTA, which are strictly less expressive than FTAs.}.  For every
FTA $R$ there exists a DFTA $R'$ such that
$L(R) = L(R')$.

\begin{definition}\label{complete}
An automaton $R = \langle Q, Q_f, \Sigma, \Delta \rangle$ is {\em complete} if 
for all n-ary functions $f \in \Sigma$
and states $q_1,\ldots,q_n \in Q$, there exists a state $q \in Q$ such that
$f(q_1,\ldots,q_n) \rightarrow q  \in \Delta$.
\end{definition}
It follows that in a complete FTA every term $t$ has at least one run and furthermore in a complete DFTA each $t$ has a run to exactly one state. Thus a complete DFTA defines a partition of $\Term(\Sigma)$, namely $\{L(q) \mid q \in Q\} \setminus \{\emptyset\}$.

\begin{figure}
\begin{center}
\includegraphics[width=4.3 in]{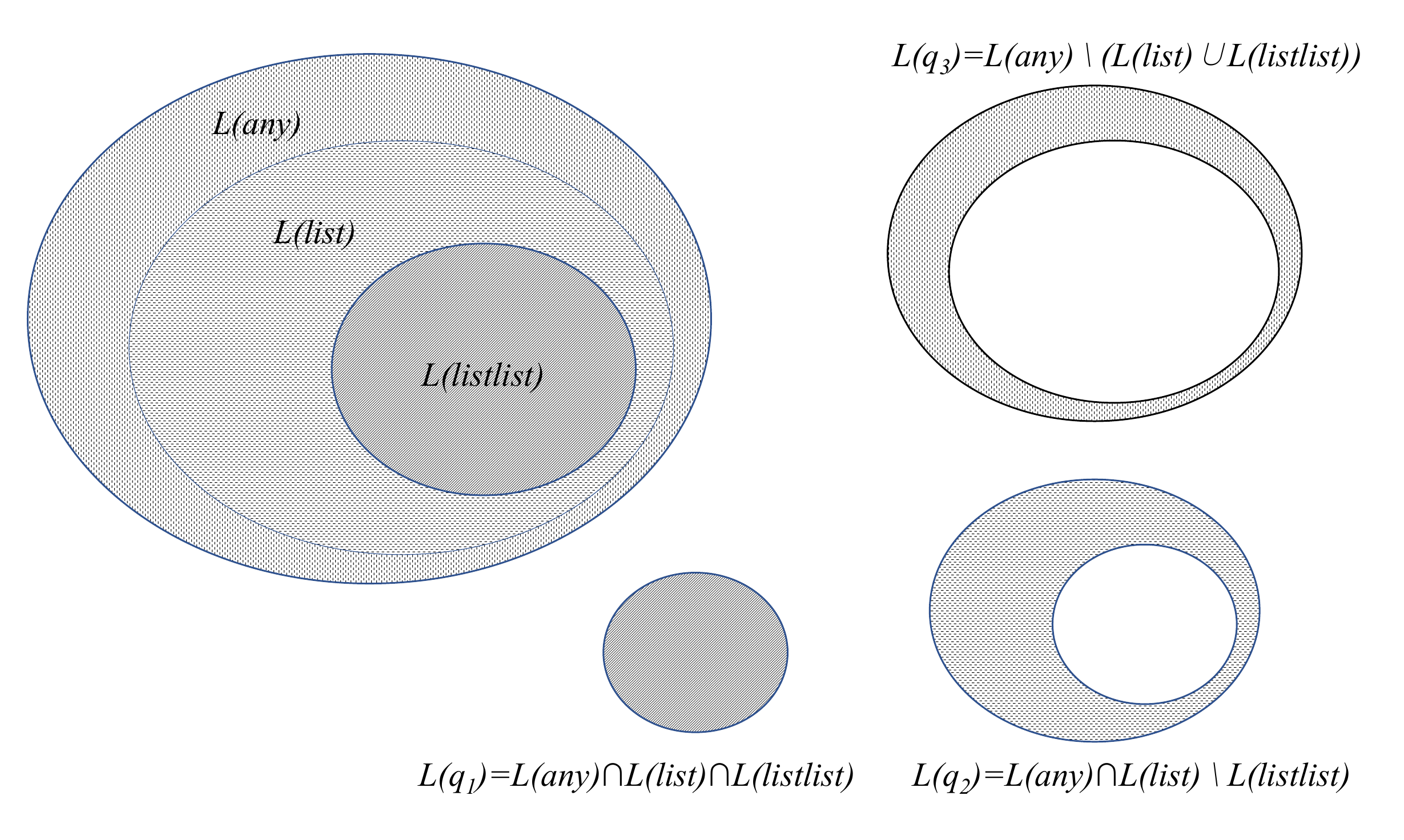}
\end{center}
\caption{The disjoint languages from Example \ref{ex2}}
\protect\label{fig-ex1}
\end{figure}

\begin{definition}\label{delta_any}
Let $\Sigma$ be a signature and ``$\any$'' a state. We define $\Delta_{\any}^{\Sigma}$ to be the following
set of transitions.
$$\{f(\stackrel{n {\rm ~times}}{\overbrace{\any,\ldots,\any}}) \rightarrow \any ~|
f^n \in \Sigma\}$$
Clearly, given an FTA $\langle Q, Q_f, \Sigma, \Delta \rangle$ with $\any\in Q$ and $\Delta_{\any}^{\Sigma} \subseteq \Delta$, there is a run $t \Rightarrow^* \any$ for any $t\in \Term(\Sigma)$, that is, $L(\any) = \Term(\Sigma)$.
\end{definition}
\noindent
We normally drop the superscript in $\Delta_{\any}^{\Sigma}$ as $\Sigma$ is usually clear from the context.
\begin{example}\label{ex2}

Let $\Sigma = \{[], [.|.], 0\}$, 
$Q = \{\listtype,\listlist,\any\}$, $Q_f = \{\listtype, \listlist\}$ and $\Delta = 
\{[] \rightarrow \listtype, [\any | \listtype]  \rightarrow \listtype,
[] \rightarrow \listlist, [\listtype | \listlist]  \rightarrow \listlist \} \cup  \Delta_{\any}$.  $L(\listtype)$ is
 the set of lists of
any terms, while $L(\listlist)$ is the set of lists whose elements are
themselves lists.  Clearly $L(\listlist)$ is contained in $L(\listtype)$, which is contained in $L(\any)$.

The automaton is not bottom-up deterministic since $[]$ occurs as the left hand side in more than one transition; a determinisation algorithm (see Section \ref{algorithm}) yields the DFTA
$\langle Q', Q'_f, \Sigma, \Delta' \rangle$, where $Q' = \{ q_1,q_2,q_3\}$,
$Q'_f = \{q_1,q_2\}$ and
 $\Delta' = \{[] \rightarrow q_1,
[q_1|q_1]  \rightarrow q_1, [q_2|q_1]  \rightarrow q_1,
[q_1|q_2]  \rightarrow q_2, [q_2|q_2]  \rightarrow q_2,
[q_3|q_2]  \rightarrow q_2, [q_3|q_1]  \rightarrow q_2, 
[q_2|q_3]  \rightarrow q_3, [q_1|q_3]  \rightarrow q_3, 
[q_3|q_3]  \rightarrow q_3, 0 \rightarrow q_3\}$.

\noindent
The states $q_1$, $q_2$ and $q_3$ are abbreviations for elements of the powerset of the states of the original FTA; here $q_1 = \{\any,\listtype,\listlist\}$, $q_2 =\{\any,\listtype\}$ and $q_3 = \{\any\}$. This automaton is also complete.  
\qed
\end{example}
In Example \ref{ex2}, $L(q_1)=
L(\any) \cap L(\listtype) \cap L(\listlist)$, $L(q_2)=(L(\listtype) \cap L(\any)) \setminus L(\listlist)$, and $L(q_3)=
L(\any) \setminus (L(\listtype) \cup L(\listlist))$.    
The relationship between the languages corresponding to the FTA and DFTA states in Example \ref{ex2} is shown in
Figure \ref{fig-ex1}. 

A critical aspect of this paper is a compact representation for the set of transitions $\Delta$, using the notion of product transition defined as follows.

\begin{definition}[Product transition]\label{product-trans}
Let $\langle Q,F,\Sigma,\Delta\rangle$ be an FTA. A {\em product transition} is of the form
$f(Q_1,\ldots,Q_n) \rightarrow q$ where $Q_i \subseteq Q, 1 \le i \le n$  
and $q \in Q$.  This product transition denotes the set of transitions
$\{f(q_1,\ldots,q_n) \rightarrow q ~|~q_1 \in Q_1,\ldots,q_n \in Q_n\}$.
Thus $\prod^{n}_{i=1} \vert Q_i\vert$  transitions are represented
by a single product transition. 
\end{definition}
Instead of expanding the product transitions, we can regard a product transition as introducing $\epsilon$-transitions.  An $\epsilon$-transition has the form $q_1 \rightarrow q_2$ where $q_1,q_2$ are states.  $\epsilon$-transitions can be eliminated, if desired. Given a product transition $f(Q_1, \ldots, Q_n) \rightarrow q$, introduce $n$ new non-final states $s_1,\ldots,s_n$ corresponding to $Q_1, \ldots, Q_n$ respectively and replace the product transition by the set of transitions $\{f(s_1,\ldots,s_n) \rightarrow q\} \cup \{ q' \rightarrow s_i \mid q' \in Q_i, 1 \le i \le n\}$.  It can be shown that this transformation preserves the language of the FTA. 

\begin{example}\label{ex3}
The transitions of the DFTA generated in Example \ref{ex2} can be represented in product
transition form as follows.
\[
\begin{array}{lll}
 \Delta' =& \{[] \rightarrow q_1 & 0 \rightarrow q_1\\
&[\{q_1,q_2\}|\{q_1\}]  \rightarrow q_1~~~~~~
&[\{q_3\}|\{q_1\}]  \rightarrow q_2\\
&[\{q_1,q_2\}|\{q_3\}]  \rightarrow q_3
&[\{q_3\}|\{q_3\}]  \rightarrow q_3\\
&[\{q_1,q_2\}|\{q_2\}]  \rightarrow q_2
&[\{q_3\}|\{q_2\}]  \rightarrow q_2\}\\
\end{array}
\]
These 8 product transitions represent the 11 transitions shown in 
Example \ref{ex2}.
There are more compact equivalent sets of product transitions, for example.
\[
\begin{array}{lll}
 \Delta'' =& \{[] \rightarrow q_1 & 0 \rightarrow q_1\\
&[\{q_1,q_2,q_3\}|\{q_3\}]  \rightarrow q_3~~~~~~~
&[\{q_1,q_2,q_3\}|\{q_2\}]  \rightarrow q_2\\
&[\{q_1,q_2\}|\{q_1\}]  \rightarrow q_1
&[\{q_3\}|\{q_1\}]  \rightarrow q_2\}\\
\end{array}
\]
 \qed
\end{example}

In the determinisation algorithm to be developed in the next section, product transitions for the output DFTA are generated directly; this leads in many cases to an exponential saving of space and time, as shown by the experiments in Section \ref{experiments}.

 % end input /Users/jpg/Research/Papers/DFTA-JLAMP/Revision/prelim.tex
 %
% start input /Users/jpg/Research/Papers/DFTA-JLAMP/Revision/algorithm.tex

\section{Development of an Optimised Determinisation Algorithm}\label{algorithm}
\begin{figure}[t]
  \centering
 \fbox{
\begin{minipage}{\textwidth}
  \begin{algorithmic}[1]
 \Procedure{FTA Determinisation }{Input: $\langle Q,Q_f,\Sigma,\Delta\rangle$}
\State $\Q_d\gets \emptyset$
\State $\Delta_d\gets \emptyset$
\Repeat
\State $\Q_d\gets \Q_d \cup \{Q_0\}$,
	\State $\Delta_d \gets \Delta_d\cup\{f(Q_1,\ldots,Q_{\arity(f)})\to Q_0\}$
	\State \textbf{where}
		\State \indent $f^n\in\Sigma, \ \ Q_1,\ldots,Q_{\arity(f)}\in \Q_d$,
		\State \indent $Q_0 = \{q_0\;|\; \exists q_1\in Q_1,\ldots, q_{\arity(f)}\in Q_{\arity(f)}, (f(q_1,\ldots, q_{\arity(f)})\to q_0)\in\Delta\}$ 
\Until{ no rules can be added to $\Delta_d$}
\State $\Q_{f} \gets \{Q' \in \Q_d \;|\; Q' \cap Q_f \neq\emptyset\}$
\State \Return $(\Q_d,  \Q_{f}, \Sigma, \Delta_d)$
\EndProcedure
\end{algorithmic}
\end{minipage}
}
  \caption{Textbook Determinisation Algorithm}
  \label{fig-alg1}
\end{figure}

In this section we present the textbook algorithm for FTA determinisation, and then proceed to optimise it.  
The starting point is the determinisation algorithm in Figure \ref{fig-alg1}, which is taken (apart from some renaming of variables) from \cite{Comon}.  

The main idea of the transformation is to delay the computation of transitions $\Delta_d$ until after the complete set of states of the DFTA, $\Q_d$, has been computed.  This has the following advantages.

\begin{itemize}
\item
It allows a suitable data structure to be built, from which the set of transitions can be generated in product form.
\item
It enables significant optimisation of the main \textbf{repeat} loop to avoid redundant computations.
\item
It allows the computation of the set of transitions to be omitted entirely if desired, thus permitting other applications of the algorithm such as inclusion checking, which only requires the states.
\end{itemize}

We note first a small ambiguity in the algorithm as presented in \cite{Comon}.  In the assignment $Q_0 = \{q_0\;|\; \exists q_1\in Q_1,\ldots, q_n\in Q_n, (f(q_1,\ldots, q_n)\to q_0)\in\Delta\}$ the right hand side is implicitly assumed to evaluate to a non-empty set, otherwise it is ignored.  Although allowing the variable $Q_0$ to take the value $\emptyset$ would return a correct result, many redundant transitions of the form $f(Q_1,\ldots,Q_n)\to \emptyset$ would be generated. In our transformed algorithm we make this assumption explicit and eliminate such transitions.

Note that the states of the computed DFTA are elements of $2^Q$ where $Q$ is the set of states of the input FTA.

\subsection{Introduction of functional notation}

Let $\langle Q,\Sigma,Q_f,\Delta \rangle$ be an FTA. Let $\delta = f(q_1,\ldots,q_n) \rightarrow q, n \ge 0$ be a transition in $\Delta$.  Define the following selector functions on $\delta$.
\[
\begin{array}{lll}
\rhs: \Delta \rightarrow Q ~~~~&\lhs_i: \Delta \hookrightarrow Q~~~&\func: \Delta \rightarrow Q\\
\rhs(\delta)  = q                              &\lhs_i(\delta) = q_i, 1 \le i \le n ~~~              &\func(\delta) = f

\end{array}
\]
The $\lhs_i$ functions are partial functions on $\Delta$ since $\lhs_i$ is not defined for every transition for a given $i$. In particular  the $\lhs_i$ functions are undefined on transitions whose function symbol has arity zero.

The inverse mappings $\lhs^{-1}_i:Q \rightarrow 2^\Delta$ and $\func^{-1}:\Sigma \rightarrow 2^\Delta$ are defined respectively as $\lhs^{-1}_i(q) = \{\delta \mid \lhs_i(\delta)=q\}$,   $\func^{-1}(f) = \{\delta \mid \func(\delta)=f\}$.   Using these, $\flhs_{i}:(\Sigma \times Q) \rightarrow 2^\Delta$ is defined as $\flhs_{i}(f,q) = \lhs^{-1}_i(q) \cap \func^{-1}(f)$.

$\flhs_{i}(f,q)$ can be regarded as an index for $\Delta$ returning the set of transitions whose function symbol is $f$ and whose left hand side has $q$ in the $i^{th}$ position. 
The mappings $\flhs_{i}$ are lifted to sets of states, giving $\Lhs_{i}$ defined as follows.
\[
\begin{array}{l}
\Lhs_{i}: (\Sigma \times 2^Q) \rightarrow 2^{\Delta}\\
\Lhs_{i}(f,S) = \bigcup_{s \in S} \flhs_{i}(f,s)\\

\end{array}
\]
We also lift $\rhs$ to sets of transitions, giving the function $\Rhs: 2^{\Delta} \rightarrow 2^Q$, where $\Rhs(T) = \{\rhs(\delta) \mid \delta \in T\}$.

The following property allows us to reformulate the algorithm, obtaining a form that is easier to manipulate.

\begin{lemma}\label{prop1}
The following expressions are equal for all $f \in \Sigma$ and $Q_1, \ldots Q_n \in 2^Q$.
\begin{equation}
\{q_0\;|\; \exists  q_1\in Q_1,\ldots, q_n\in Q_n, (f(q_1,\ldots, q_n)\to q_0)\in\Delta\}
\end{equation}
\begin{equation}
\begin{split}
{\rm if}~\arity(f)=0 &~{\rm then}~\Rhs(\func^{-1}(f))~{\rm else}~\\      &\Rhs(\Lhs_1(f,Q_1) \cap \cdots \cap \Lhs_{\arity(f)}(f, Q_{\arity(f)}))
\end{split}
\end{equation}

\end{lemma}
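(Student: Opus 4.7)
The plan is to prove the equality by chasing definitions, splitting into the two cases dictated by the conditional on the right-hand side of (2).

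For the base case $\arity(f)=0$, the existential in (1) is vacuous, so (1) reduces to $\{q_0 \mid (f \to q_0) \in \Delta\}$. On the other side, $\func^{-1}(f)$ consists precisely of all transitions with function symbol $f$, which, because $\arity(f)=0$, are exactly the transitions of the form $f \to q_0$. Applying $\Rhs$ collects the right-hand sides $q_0$, yielding the same set. This case is routine.

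For the main case $n = \arity(f) \ge 1$, I would first unfold the inner expression by substituting the definitions of $\Lhs_i$ and $\flhs_i$:
\[
\Lhs_i(f, Q_i) \;=\; \bigcup_{s \in Q_i} \bigl(\lhs_i^{-1}(s) \cap \func^{-1}(f)\bigr) \;=\; \{\delta \in \Delta \mid \func(\delta) = f,\ \lhs_i(\delta) \in Q_i\}.
\]
Intersecting over $i = 1, \ldots, n$ gives
\[
\bigcap_{i=1}^{n} \Lhs_i(f, Q_i) \;=\; \{\delta \in \Delta \mid \func(\delta) = f \text{ and } \lhs_i(\delta) \in Q_i \text{ for all } 1 \le i \le n\}.
\]
These are exactly the transitions $\delta = f(q_1, \ldots, q_n) \to q_0$ in $\Delta$ such that $q_i \in Q_i$ for every $i$. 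Applying $\Rhs$ to this set extracts the right-hand sides, producing
\[
\{q_0 \mid \exists q_1 \in Q_1, \ldots, q_n \in Q_n,\ (f(q_1, \ldots, q_n) \to q_0) \in \Delta\},
\]
which is exactly (1). Double inclusion follows immediately from these set-builder equalities.

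There is no real obstacle here; the proof is a straightforward definitional unfolding. The only subtlety to be mindful of is that $\lhs_i$ is a partial function undefined on constants, which is precisely why the arity-zero case must be handled separately rather than letting the intersection range over an empty set of indices; once that split is made explicit, the rest is purely mechanical.
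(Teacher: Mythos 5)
Your proof is correct and follows exactly the route the paper intends: its proof of Lemma~\ref{prop1} is the one-line remark ``Application of the definitions of $\Rhs$, $\Lhs_i$ and set operations,'' and your argument is simply that unfolding carried out explicitly, including the necessary case split on $\arity(f)=0$. Nothing to change.
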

\begin{proof}
Application of the definitions of $\Rhs$, $\Lhs_i$ and set operations.
\qed
\end{proof}

\subsection{Delaying computation of transitions}\label{sec-trans}

Examining the \textbf{repeat} loop in Figure \ref{fig-alg1}, we observe that 
the values of $\Delta_d$ and $\Q_d$ are initialised to $\emptyset$ before the first iteration of the loop and recomputed on each iteration of the loop body. Let us  represent the action of the loop body as a function $\phi$, and express the effect of the loop body as $(\Q'_d, \Delta'_d)  \gets \phi( \Q_d,\Delta_d)$.

We can then summarise the execution of the \textbf{repeat} loop as the sequence 
\[
\begin{array}{l}
(\Q^1_d,\Delta^1_d) \gets \phi(\emptyset, \emptyset)\\
(\Q^2_d,\Delta^2_d) \gets \phi(\Q^1_d,\Delta^1_d)\\
 \ldots\\
 (\Q^{k-1}_d,\Delta^{k-1}_d) \gets \phi(\Q^{k-2}_d,\Delta^{k-2}_d)\\
 (\Q^k_d,\Delta^{k}_d) \gets \phi(\Q^{k-1}_d,\Delta^{k-1}_d)\\
\end{array}
\]
\noindent
and the loop terminates when $\Delta^k_d=\Delta^{k-1}_d$.  We can establish that the above iteration sequence could be replaced by the following one, in which the value of $\Delta_d$ does not accumulate but is computed ``from scratch'' on each iteration, in effect executing $\Delta_d \gets \emptyset$ at the start of each iteration.  This is because the set of transitions derived in each iteration depends only on the value of $\Q_d$ at the start of the loop body, and not on the current value of $\Delta_d$.
\[
\begin{array}{l}
(\Q^1_d,\Delta^1_d) \gets \phi(\emptyset, \emptyset)\\
(\Q^2_d,\Delta^2_d) \gets \phi(\Q^1_d,\emptyset)\\
 \ldots\\
 (\Q^{k-1}_d,\Delta^{k-1}_d) \gets \phi(\Q^{k-2}_d,\emptyset)\\
 (\Q^k_d,\Delta^{k}_d) \gets \phi(\Q^{k-1}_d,\emptyset)\\
\end{array}
\]
The sequence of values $(\Q^1_d,\Delta^1_d),\ldots,(\Q^k_d,\Delta^{k}_d)$ is the same in the two iteration sequences, except that possibly the second sequence is shorter by one, since the termination condition for the \textbf{repeat} loop is then changed to $\Q^k_d = \Q^{k-1}_d$, possibly reducing the number of iterations by one, as the value $\Q_d$ can stabilise one iteration earlier than the value of $\Delta_d$ does. Comparison of the sequences shows that the computation of $\Delta_d$ can be removed from the {\bf repeat} loop entirely, delaying it until after the termination of the loop. From the second sequence we can see that the final value of $\Delta_d$ can be computed from the final value of $\Q_d$ with one extra execution of the loop body. We return to the computation of $\Delta_d$ in Section \ref{transitions}.

The next stage of the transformed algorithm after applying these transformations is displayed in Figure \ref{fig-alg4}, where the processing of 0-arity functions, which depends only on the original FTA, is completed (lines 3-10) before entering the main loop.

The transformations so far are fairly superficial and have little bearing on the efficiency of the algorithm.  However they enable us to focus on the iterations of the inner {\bf for all} loop, with a view to more substantial  efficiency improvements.  

\begin{figure}[t]
  \centering

 \fbox{
\begin{minipage}{\textwidth}
\begin{algorithmic}[1]
\Procedure{FTA Determinisation }{Input: $\langle Q,\Sigma,Q_f,\Delta\rangle$}
\State $\Q_d\gets \emptyset$
\ForAll {$f \in \Sigma$}
    \If {$(\arity(f)=0)$}
	  \State  $Q_0 \gets \Rhs(\func^{-1}(f))$
	  \If {$Q_0 \neq \emptyset$}
		   \State $\Q_d \gets \Q_d \cup \{Q_0\}$
          \EndIf
     \EndIf
\EndFor
\Repeat
	\State $\Q_d^{old}\gets \Q_d$
	\ForAll {$f \in \Sigma$}
	\If {$(\arity(f)>0)$}
	\State $n \gets \arity(f)$
	\ForAll {$(Q_1, \ldots Q_n) \in (\Q_d^{old} \times \cdots \times \Q_d^{old})$}
		  \State $Q_0 \gets \Rhs(\Lhs_1(f,Q_1) \cap \cdots \cap \Lhs_n(f, Q_n))$ 
		\If {$Q_0 \neq \emptyset$}
		   \State $\Q_d \gets \Q_d \cup \{Q_0\}$
		\EndIf 
	\EndFor
	\EndIf
	\EndFor
\Until{$\Q_d = \Q_d^{old}$}
\State{Compute the set of transitions $\Delta_d$ (see Section \ref{transitions})}
\State $\mc{Q}_{f} \gets \{Q' \in \Q_d \;|\; Q' \cap Q_f \neq\emptyset\}$
\State \Return $(\Q_d, \Sigma, \mc{Q}_{f}, \Delta_d)$
\EndProcedure
\end{algorithmic}
\end{minipage}
}
 \caption{Algorithm after delaying the computation of transitions}
  \label{fig-alg4}
\end{figure}

\subsection{Inner Loop Optimisation}\label{step5}
The fact that we no longer need to compute transitions in the inner loop can lead to major savings since we can focus on the computation of $\Q_d$. 
Let us suppose that $|\Q_d^{old}| =k$ in Figure \ref{fig-alg4} (line 12). Then for a function symbol $f$ of arity $n$, there are  $k^n$  tuples $(Q_1, \ldots Q_n)$ in the cartesian product $(\Q_d^{old} \times \cdots \times \Q_d^{old})$ and so the function $\Lhs_i(f,Q_j)$ is called $n*k^n$ times. On the other hand, within the loop there are only $k*n$ different calls of the form $\Lhs_i(f,Q_j)$ and therefore it is worth precomputing these $k*n$ values outside the loop and avoid recomputing the same call many times.  Furthermore, cases of $\Lhs_i(f,Q_j)$ that evaluate to the empty set can be ignored since they cannot contribute to a non-empty value of $Q_0$ within the loop, since $\Rhs(\emptyset) = \emptyset$.

We precompute the $\Lhs_i(f,Q_j)$ values by introducing a function called $\fmap_i : (\Sigma \times 2^{2^Q}) \rightarrow 2^{2^{\Delta}}$ defined as 
\[
\fmap_i(f,\Q') = \{\Lhs_i(f,Q') \mid Q' \in \Q'\} \setminus \{\emptyset\}. 
\]
This function is defined for $1 \le i \le n$ for a function of arity $n$ and returns a set of sets of transitions. The inner
{\bf for all} loop is then rewritten to iterate over tuples of sets of transitions chosen from the product $\fmap_1(f,\Q_d^{old}) \times \cdots \times \fmap_n(f,\Q_d^{old})$ instead of $(\Q_d^{old} \times \cdots \times \Q_d^{old})$. 

It can be seen that the same non-empty values of $Q_0$ are generated within the loop. Whereas previously we iterated over tuples $(Q_1, \ldots Q_n)$ in the cartesian product $(\Q_d^{old} \times \cdots \times \Q_d^{old})$, and then applied  $\Rhs(\Lhs_1(f,Q_1) \cap \cdots \cap \Lhs_n(f, Q_n))$, now we precompute all the possible $\Lhs_i(f,Q_i)$ values (which are sets of transitions $\Delta_i$) and then call  $\Rhs(\Delta_1) \cap \cdots \cap \Delta_n)$ for every possible tuple of non-empty sets of transitions $(\Delta_1, \ldots \Delta_n)$. 

The transformation of the inner loop is significant in typical applications.  Instead of $k^n$ iterations of the loop, where $k=\vert \Q_d^{old}\vert$, there are $\prod_{i=i}^{n} \vert \fmap_i(f,\Q_d^{old})\vert$ iterations which is usually much smaller. Note that in many FTAs the size of the set $\fmap_i(f,\Q_d^{old})$ is  much smaller than $k$ (and is often zero) since the states of the input automaton tend to appear in only a few argument positions of function symbols.

\subsection{Tracking new values on each iteration}
We now apply an optimisation that further reduces the computation in the innermost loop. As it stands in Figure \ref{fig-alg4}, any state $Q_0$ generated on some iteration at line 17 is also generated on all subsequent iterations of the {\bf repeat} loop. To avoid this, we note that when evaluating the statement $Q_0 \gets \Rhs(\Delta_1 \cap \cdots \cap \Delta_n)$ in some iteration of the {\bf repeat} loop, a new value is obtained for $Q_0$ only when at least one of $\Delta_1, \ldots, \Delta_n$ is a new value, that is, one that was not available on the previous iteration. We therefore try to avoid re-evaluating old values of $\Delta_i$ for each $i$.

Some bookkeeping is needed to keep track of new values.  A variable $\Q_d^{new}$ represents the new elements of $\Q_d$ produced on some iteration. (The termination condition of the {\bf repeat} loop is altered to $\Q_d^{new}=\emptyset$). We introduce variables $\Psi^f_i$, which have the value of $\fmap_i(f,\Q_d)$.  The variables are initialised to $\emptyset$ and their values are augmented on each iteration. The statement $(\Phi_1,\ldots,\Phi_n) \gets (\fmap_1(f,\Q_d^{new})\setminus \Psi^f_1, \ldots, \fmap_n(f,\Q_d^{new})\setminus \Psi^f_n)$ computes the new sets of transitions from which the $\Delta_i$ sets can be chosen. 

The innermost {\bf for all} statement now iterates over the union of sets of tuples $Z$ where  $Z = \bigcup_{i=1}^n(\Psi^f_1\times \cdots \times \Psi^f_{i-1}\times \Phi_i  \times \Psi^{f,new}_{i+1} \times \cdots \times\Psi^{f,new}_n)$, which consists of exactly those tuples which contain at least one new value (that is, from one of the $\Phi_i$). Note in particular that if no new values for any argument are produced for some $f$ on some iteration, then the iteration set for $f$ on the next iteration is empty, since all the variables $(\Phi_1,\ldots,\Phi_n)$ have the value $\emptyset$.

\subsection{Computing the transitions of the DFTA in product form}\label{transitions}
As noted in Section \ref{sec-trans},  the set of transitions $\Delta_d$ of the determinised automaton can be computed from the final set of states $\Q_d$ with one extra iteration of the \textbf{repeat} loop.

\begin{figure}
  \centering

 \fbox{
\begin{minipage}{\textwidth}

\begin{algorithmic}[1]
       \State $\Delta_d \gets \emptyset$
	\ForAll {$f \in \Sigma$}
	\If {$(\arity(f)=0)$}
	  \State  $Q_0 \gets \Rhs(\func^{-1}(f))$
	  \If {$Q_0 \neq \emptyset$}
		   \State $\Delta_d\gets \Delta_d \cup \{f \to Q_0\}$
          \EndIf
	\Else
	\State $n \gets \arity(f)$
       	\State $\Psi^f_1, \ldots,\Psi^f_n \gets \fmap_1(f,\Q_d),\ldots,\fmap_n(f,\Q_d)$
	\ForAll {$(\Delta_1, \ldots \Delta_n) \in (\Psi^f_1\times \cdots \times \Psi^f_n)$}
	    	\State $Q_0 \gets \Rhs(\Delta_1 \cap \cdots \cap \Delta_n)$ 
		\If {$Q_0 \neq \emptyset$}
			\State $\Q_1, \ldots,\Q_n \gets \fmap^{-1}_1(f,\Q_d,\Delta_1),\ldots, \fmap^{-1}_n(f,\Q_d,\Delta_n)$
		 	\State $\Delta_d\gets \Delta_d \cup \{f(\Q_1,\ldots,\Q_n)\to Q_0\}$
		\EndIf 
	\EndFor
	\EndIf
	\EndFor

\end{algorithmic}
\end{minipage}
}
 \caption{Computation of product transitions}
  \label{fig-trans-3}
\end{figure}

\begin{figure}
  \centering
 \fbox{
\begin{minipage}{11 cm}
\begin{algorithmic}[1]
\Procedure{FTA Determinisation }{Input: $\langle Q,\Sigma,Q_f,\Delta\rangle$}
\State $\Q_d\gets \emptyset$
\ForAll {$f \in \Sigma$}
    \If {$(\arity(f)=0)$}
	  \State  $Q_0 \gets \Rhs(\func^{-1}(f))$
	  \If {$Q_0 \neq \emptyset$}
		   \State $\Q_d \gets \Q_d \cup \{Q_0\}$
          \EndIf
     \EndIf
     \State $(\Psi^f_1,\ldots,\Psi^f_n) \gets (\emptyset,\ldots,\emptyset)$ 
\EndFor
\State $\Q_d^{new} \gets \Q_d$
\Repeat
	\State $\Q_d^{old}\gets \Q_d$
	\ForAll {$f \in \Sigma$}
	       \If {$(\arity(f)>0)$}
	        \State $n \gets \arity(f)$
	        \State $(\Phi_1,\ldots,\Phi_n) \gets (\fmap_1(f,\Q_d^{new})\setminus \Psi^f_1, \ldots, \fmap_n(f,\Q_d^{new})\setminus \Psi^f_n)$
	        \State $(\Psi^{f,new}_1,\ldots,\Psi^{f,new}_n) \gets (\Psi^f_1 \cup \Phi_1,\ldots,\Psi^f_n \cup \Phi_n)$
	        \State $Z \gets \bigcup_{i=1}^n(\Psi^f_1\times \cdots \times \Psi^f_{i-1}\times \Phi_i  \times \Psi^{f,new}_{i+1} \times \cdots \times\Psi^{f,new}_n)$
                \ForAll {$(\Delta_1, \ldots \Delta_n) \in Z}$
		\State $Q_0 \gets \Rhs(\Delta_1 \cap \cdots \cap \Delta_n)$ 
		\If {$Q_0 \neq \emptyset$}
		   \State $\Q_d \gets \Q_d \cup \{Q_0\}$
		\EndIf 
	        \EndFor
	        \EndIf
	        	\State $(\Psi^f_1,\ldots,\Psi^f_n) \gets (\Psi^{f,new}_1,\ldots,\Psi^{f,new}_n)$
	\EndFor
        \State $\Q_d^{new} \gets \Q_d \setminus \Q_d^{old}$
\Until{$\Q_d^{new}=\emptyset$}
\State{Compute the set of transitions $\Delta_d$ (see Figure \ref{fig-trans-3})}
\State $\mc{Q}_{f} \gets \{Q' \in \Q_d \;|\; Q' \cap Q_f \neq\emptyset\}$
\State \Return $(\Q_d, \Sigma, \mc{Q}_{f}, \Delta_d)$
\EndProcedure
\end{algorithmic}
\end{minipage}
}
 \caption{Optimised algorithm}
  \label{fig-alg6}
\end{figure}
The idea is that, at the point in Figure \ref{fig-alg4} where state $Q_0$ is generated (line 17), we could have generated a transition $f(Q_1,\ldots,Q_n) \rightarrow Q_0$. 
However, following the optimisations of the inner loop described in Section \ref{step5}, we no longer compute the tuple $(Q_1,\ldots,Q_n)$ explicitly. Instead, we keep track of the information needed to generate the tuple $(Q_1,\ldots,Q_n)$ and the function $f$, so that we can construct the transition later.

To do this efficiently, we introduce another function $\fmap^{-1}_i: (\Sigma \times 2^{2^Q}\times  2^{\Delta})  \rightarrow 2^{2^Q}$, defined as follows.

 \[
\fmap^{-1}_i(f,\Q,\Delta') = \{Q' \mid \Lhs_i(f,Q')=\Delta', Q' \in \Q\}
\]

The set of transitions for $f$ is then all transitions of the form $f(Q_1,\ldots,Q_n) \rightarrow \Rhs(\Delta_1 \cap \cdots \cap \Delta_n)$ such that 
$(\Delta_1, \ldots \Delta_n) \in (\fmap_1(f,\Q_d)\times \cdots \times\fmap_n(f,\Q_d)$ and
$(Q_1,\ldots,Q_n) \in (\fmap^{-1}_1(f,\Q_d,\Delta_1)\times \cdots \times \fmap^{-1}_n(f,\Q_d,\Delta_n)$.

Product transitions for $\Delta_d$ can then be obtained directly. We simply omit the enumeration of the tuples $Q_1,\ldots,Q_n$  which are  elements of $\fmap^{-1}_1(f,\Q_d,\Delta_1)\times \cdots \times \fmap^{-1}_n(f,\Q_d,\Delta_n)$. This gives the algorithm in Figure \ref{fig-trans-3} for generating $\Delta_d$ in product form.  
In the expression $f(\Q_1,\ldots,\Q_n)\to Q_0$ in Figure \ref{fig-trans-3}, $\Q_1,\ldots,\Q_n$ are sets of DFTA states.  

\subsection{Further optimisation}
The final algorithm then consists of Figure \ref{fig-alg6} for computing $\Q_d$ together with Figure \ref{fig-trans-3} for computing $\Delta_d$ in product form.
We note that in Figure \ref{fig-trans-3} for computing $\Delta_d$, the values of $\Psi^f_1, \ldots,\Psi^f_n$ are available from the main {\bf repeat} loop in Figure \ref{fig-alg6} and do not need to be recomputed.  Also, the values of the expressions $\fmap^{-1}_i(f,\Q_d,\Delta_i)$ can be computed in the main loop of Figure \ref{fig-alg6} by tabulating computed values of $\Lhs_i$;  more precisely, whenever an expression $\Lhs_i(f,Q')$ is evaluated and yields a non-empty value $\Delta'$, $Q'$ is added to the set of values $\fmap^{-1}_i(f,\Q_d,\Delta')$. 
%

 % end input /Users/jpg/Research/Papers/DFTA-JLAMP/Revision/algorithm.tex
 %
% start input /Users/jpg/Research/Papers/DFTA-JLAMP/Revision/complexity.tex
\section{Performance of the optimised algorithm}\label{complexity}

\paragraph{Worst Case} Consider first the worst case running time for determinising the FTA $\langle Q, F, \Sigma, \Delta\rangle$.  The size of the input is measured by $|\Sigma|$, $|Q|$ and $n$, the maximum arity of the elements of $\Sigma$. 

The main {\bf repeat} loop of the algorithm in Figure \ref{fig-alg6} can be traversed up to $2^{|Q|}$ times, which is the upper bound of the number of states in the DFTA.  For each $f \in \Sigma$ within the main loop there are up to $(2^{|Q|})^n$ iterations, since the size of $\fmap_i(f,\Q^{old}_d)$ can be up to $|\Q_d|=2^{|Q|}$.  Combining all three nested loops, the complexity of the main loop of the algorithm is $O(\vert \Sigma \vert . 2^{|Q|.(n+1)})$.

Considering the product transitions, the maximum number of iterations of the transition-generation loop for each $n$-ary $f \in \Sigma$ is $(2^{|Q|})^n$ and there is at most one product transition generated in each iteration. Hence the number of product transitions generated in the worst case is $O(\vert \Sigma \vert . 2^{n.|Q|})$. If the automaton is also completed, as discussed in Section \ref{completion}, then $|Q|$ is possibly increased by one (the state $\any$).

Regarding both the running time and the size of the output, the optimised algorithm performs no better in the worst case than the textbook algorithm with explicit enumeration of the DFTA transitions. 

\paragraph{Running time in practice} The worst case of $2^{|Q|}$ for the number of DFTA states seems to be approached only in unusual situations; to achieve it, for every pairs of states $q,q'$ of the original automaton it would have to be the case that $L(q) \cap L(q')$, $L(q) \setminus L(q')$ and $L(q') \setminus L(q)$ are all nonempty (since the states of the generated DFTA include only ones accepting some term, see Lemma \ref{dfta-nonempty}).  For such a pair of states, it is more common either that $L(q)$  and $L(q')$ are disjoint or that one includes the other.  If this is the case for all such pairs, then the number of DFTA states is at most the number of original states.  In Example \ref{ex2} the number of DFTA states is the same as the number of original states; our experiments (Section \ref{experiments}) show a tendency for the size of the set of states of the FTA and the  corresponding DFTA to be close for examples where the FTA is a description of program data.  The set of DFTA states may even be smaller, if the original FTA has redundant or duplicated states - which sometimes happens with automatically generated FTAs.

Replacing $2^{|Q|}$ by $\vert\Q_d\vert$ in the complexity expressions gives a running time of $O(\vert \Sigma \vert . \vert\Q_d \vert^{n+1})$ and $O(\vert \Sigma \vert . |\Q_d|^{n})$ for the number of product transitions in the output. Even if $|\Q_d|$ is small, we can see that high-arity function symbols still present a potential for blow-up. Again, in practice this danger is often greatly reduced by the structure of the input FTA.  
As already noted, the size of $\fmap_i(f,\Q_d)$, whose worst-case size is $|\Q_d|$ is usually much smaller than $|\Q_d|$.  This is due to the natural ``typing'' of function symbols.  A function argument position in the original FTA is typically associated with a small number of states.  However, there are applications where the size of $\fmap_i(f,\Q_d)$ is larger and for these is a danger of blow-up for high-arity function symbols.  Such applications will be discussed in Section \ref{experiments}.
%
%
 % end input /Users/jpg/Research/Papers/DFTA-JLAMP/Revision/complexity.tex
 %
% start input /Users/jpg/Research/Papers/DFTA-JLAMP/Revision/completion.tex
\section{Complete DFTAs }\label{completion}

Recall that in a complete FTA (Definition \ref{complete}) every term $t \in\Term(\Sigma)$ has a run $t \Rightarrow^* q$ where $q \in Q$. An FTA can always be completed \cite{Comon}, by adding an extra state to $Q$ and adding extra transitions to $\Delta$.  
\begin{example}\label{ex-completion}
Consider the following FTA $\langle Q,\Sigma, Q_f,\Delta\rangle$ where 
$Q = \{\listtype,\num\}$, $Q_f = \{\listtype\}$
$\Sigma = \{[], [.|.], 0,s(.)\}$,  and $\Delta = 
\{[] \rightarrow \listtype, [\num | \listtype]  \rightarrow \listtype,
0 \rightarrow \num, s(\num)  \rightarrow \num \}$.  The FTA is not complete; for instance there is no transition with left hand side $s(\listtype)$ or $[\num|\num]$.  Thus the terms $s([])$ and $[s(0)|0]$, for example, have no run. To complete it, we can add an extra non-final state, say $\e$ (for \emph{error}), and add the following transitions.
\[
\begin{array}{ll}
\Delta_{e} = &\{s(\listtype) \to \e, s(\e) \to \e, [\listtype|\listtype] \to \e,\\ 
& [\num|\num] \to \e, [\listtype|\num] \to \e,[\e|\listtype] \to \e,\\
& [\listtype|\e] \to \e, [\e|\e] \to \e, [\num|\e] \to \e, [\e|\num] \to \e \}
\end{array}
\]
\noindent
The FTA $\langle Q \cup \{\e\},\Sigma, Q_f,\Delta \cup \Delta_e\rangle$ is complete and accepts the same language as the original FTA (the set of lists of numbers).  Any term that did not have a run in the original now has a run to $e$ (which is not an accepting state).
\end{example}
One can see that the number of transitions in the completed FTA is determined by the arity of the function symbols and the number of states, and that it is exponential in the arity of the function symbols.  

It would be possible to modify the textbook determinisation procedure in Figure \ref{fig-alg1} to complete the FTA as well as determinise it.  However, it would then be more involved to apply the optimisations and generate product transitions.  The approach given in the next section is to add transitions to the input FTA (namely $\Delta_{\any}$) that guarantee that the resulting DFTA is complete.  The optimisations introduced in the previous sections then take effect, and transitions are generated in product form, without any further modification of the algorithm. 

\subsection{Simultaneous completion and determinisation using $\Delta_{\any}$}

Recall that given a finite signature $\Sigma$ we define the set of transitions $\Delta_{\any}$ as
$\{f(\stackrel{n {\rm ~times}}{\overbrace{\any,\ldots,\any}}) \rightarrow \any ~|
f^n \in \Sigma\}$.  Clearly for any term $t \in \Term(\Sigma)$ there exists a run $t \Rightarrow^* \any$. The  following lemma shows that we can use $\Delta_{\any}$ to obtain a complete DFTA from a given FTA, in other words, we can perform determinisation and completion simultaneously.  First we establish an important property of the DFTAs generated by the algorithm - that they contain only ``productive'' states.

\begin{lemma}\label{dfta-nonempty}
Given an FTA, for every state $Q'$ of the DFTA obtained by determinising it using the given (optimised) algorithm,  $L(Q') \neq \emptyset$.
\end{lemma}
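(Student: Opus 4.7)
The plan is to prove the lemma by induction on the iteration of the main \textbf{repeat} loop in Figure \ref{fig-alg6} at which $Q'$ is first added to $\Q_d$. The key observation is that the explicit guard $Q_0 \neq \emptyset$ at lines 6 and 22 of Figure \ref{fig-alg6} is exactly what rules out unproductive states, so the induction only has to package this syntactic fact into a semantic witness. For each $Q'$ I will construct an explicit term $t \in \Term(\Sigma)$ with $t \Rightarrow^* Q'$ in the output DFTA.

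For the base case, $Q'$ is added during the preprocessing loop over nullary symbols, so $Q' = \Rhs(\func^{-1}(f))$ for some constant $f$ and $Q' \neq \emptyset$ by the guard. Figure \ref{fig-trans-3} (lines 3--7) then adds the transition $f \rightarrow Q'$ to $\Delta_d$, witnessing $f \in L(Q')$. For the inductive step, $Q'$ is produced at line 22 of Figure \ref{fig-alg6} from some $f \in \Sigma$ of arity $n>0$ and some tuple $(\Delta_1,\ldots,\Delta_n)$ drawn from $\Psi^f_1 \times \cdots \times \Psi^f_n$ after the current iteration. Each $\Delta_i$ equals $\Lhs_i(f,Q_i)$ for some $Q_i \in \Q_d$ that was added in an earlier iteration, so by the inductive hypothesis there exist $t_i \in \Term(\Sigma)$ with $t_i \Rightarrow^* Q_i$ in the DFTA.

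It then suffices to argue that $\Delta_d$ contains a transition $f(Q_1,\ldots,Q_n) \rightarrow Q'$. By the definition of the product transitions in Figure \ref{fig-trans-3}, the tuple $(\Delta_1,\ldots,\Delta_n)$ is enumerated in the outer transition-generation pass (since the $\Psi^f_i$ accumulate all non-empty $\Lhs_i(f,Q'')$ values for $Q'' \in \Q_d$), yielding the product transition $f(\Q_1,\ldots,\Q_n) \rightarrow Q'$ with $\Q_i = \fmap^{-1}_i(f,\Q_d,\Delta_i)$. By construction of $\fmap^{-1}_i$, $Q_i \in \Q_i$, and hence the instance $f(Q_1,\ldots,Q_n) \rightarrow Q'$ lies in the expansion of the product transition (Definition \ref{product-trans}). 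Combined with $t_i \Rightarrow^* Q_i$, this gives $f(t_1,\ldots,t_n) \Rightarrow^* f(Q_1,\ldots,Q_n) \Rightarrow Q'$, so $L(Q') \neq \emptyset$.

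The main obstacle I anticipate is purely bookkeeping: reconciling the simple specification of line 9 in Figure \ref{fig-alg1} (``$Q_0 = \{q_0 \mid \exists q_i \in Q_i, f(q_1,\ldots,q_n) \rightarrow q_0 \in \Delta\}$'') with the incremental split into $\Phi_i$/$\Psi^f_i$ of Figure \ref{fig-alg6} and the product-form output of Figure \ref{fig-trans-3}. Concretely, one needs the invariant that by the end of the computation, $\Psi^f_i = \fmap_i(f,\Q_d)$ and that $Q_i \in \fmap^{-1}_i(f,\Q_d,\Lhs_i(f,Q_i))$ whenever $\Lhs_i(f,Q_i) \neq \emptyset$; once this invariant is established, the inductive step goes through as described above.
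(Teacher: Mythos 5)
Your proof is correct and follows essentially the same route as the paper's: induction on the iteration at which $Q'$ is first generated, building a witnessing term $f(t_1,\ldots,t_n)$ from witnesses for the argument states supplied by the inductive hypothesis. The only difference is that the paper argues ``for simplicity'' on the intermediate algorithm of Figure~\ref{fig-alg4}, whereas you carry the induction through the final optimised algorithm and check explicitly that the required instance $f(Q_1,\ldots,Q_n)\to Q'$ appears in the expansion of a product transition generated by Figure~\ref{fig-trans-3} --- a slightly more thorough rendering of the same argument.
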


\begin{proof}
We consider the algorithm in Figure \ref{fig-alg4} for simplicity, rather than the final optimised algorithm. Consider a DFTA state $Q'$ that is generated in the algorithm.  We reason by induction on the number of  iterations of the \textbf{repeat} loop of the algorithm. 
\begin{itemize}
\item
Base case $(i=0)$: If $Q'$ is generated before the first iteration, then there exists some 0-arity function $f$ such that $Q' = \Rhs(\func^{-1}(f))$.  The transition $f \to Q'$ is generated on iteration 1 hence there is a run $f \Rightarrow Q'$. 
\item
Induction $(i > 1)$:  Assume that the lemma holds for all states generated up to the $i-1^{th}$ iteration, and that $Q'$ is a new state generated on the $i^{th}$ iteration.  Then there exist states $Q_1,\ldots,Q_n$ which were generated in the first $i-1$ iterations, such that a transition $f(Q_1,\ldots,Q_n) \to Q'$ is constructed on the $i^{th}$ iteration.  By the inductive hypothesis, there exist terms $t_1,\ldots,t_n$ such that $t_i \Rightarrow^* Q_i, 1 \le i \le n$. Hence there is a run $f(t_1,\ldots,t_n) \Rightarrow^* Q'$.
\end{itemize}
Hence the lemma holds for states generated on any iteration.
\qed
\end{proof}

\begin{lemma}\label{dfta-complete}
Let $\langle Q, Q_f, \Sigma, \Delta\rangle$ be an FTA such that every term $t \in \Term(\Sigma)$ has a run $t \Rightarrow^* q$ for some $q \in Q$. Then the DFTA obtained by determinising $\langle Q, Q_f, \Sigma, \Delta \rangle$ is complete.
\end{lemma}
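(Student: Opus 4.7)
The plan is to exploit Lemma \ref{dfta-nonempty} together with the completeness assumption on the input FTA to force, for every tuple of DFTA states, a non-empty result in the set-comprehension on line 9 of Figure \ref{fig-alg1} (equivalently, in the assignment to $Q_0$ in Figure \ref{fig-alg4}). Once that non-empty $Q_0$ exists, the algorithm is obliged to generate a transition $f(Q_1,\ldots,Q_n) \to Q_0$, which is exactly what completeness of the output DFTA requires.

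In detail, I would fix an $n$-ary $f \in \Sigma$ and an arbitrary tuple of DFTA states $Q_1,\ldots,Q_n$. By Lemma \ref{dfta-nonempty}, each $L(Q_i)$ is non-empty, so I can pick witnesses $t_i \in \Term(\Sigma)$ with $t_i \Rightarrow^* Q_i$ in the DFTA. The next step is to use the standard powerset-construction invariant: a DFTA state $Q'$ reached from $t$ is exactly $\{q \in Q \mid t \Rightarrow^* q \text{ in the input FTA}\}$. This can be proved by induction on $t$ in essentially the same style as Lemma \ref{dfta-nonempty}, using the definition of $Q_0$ in the algorithm. Granted this invariant, $Q_i$ equals the set of original states reachable from $t_i$. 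Then the assumption of the lemma applied to the term $f(t_1,\ldots,t_n) \in \Term(\Sigma)$ yields a run $f(t_1,\ldots,t_n) \Rightarrow^* q$ in the original FTA; that run must factor as $t_i \Rightarrow^* q_i$ followed by an application of some transition $f(q_1,\ldots,q_n) \to q \in \Delta$, and by the invariant each $q_i \in Q_i$. Hence $q$ belongs to $\{q_0 \mid \exists q_1 \in Q_1,\ldots,q_n \in Q_n,\ f(q_1,\ldots,q_n) \to q_0 \in \Delta\}$, so that set is non-empty. Since $Q_1,\ldots,Q_n$ all appear in $\Q_d$ eventually, some iteration of the \textbf{repeat} loop in Figure \ref{fig-alg4} (or, equivalently, the transition-generation pass of Figure \ref{fig-trans-3}) processes this tuple, produces a non-empty $Q_0$, and adds a (product) transition whose left-hand side is $f(Q_1,\ldots,Q_n)$. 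This is exactly the condition of Definition \ref{complete}, so the DFTA is complete.

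The main obstacle I expect is that the powerset-construction invariant connecting DFTA runs to sets of original runs is used silently throughout the paper but has not been formalised as a separate lemma; I would either prove it as a short auxiliary induction on term structure before carrying out the argument above, or weave it into the proof by directly verifying that $Q_i$ contains every $q_i$ with $t_i \Rightarrow^* q_i$ in the input FTA. A secondary subtlety is ensuring that the inner-loop optimisations of Section \ref{step5} and the new-value bookkeeping do not skip the tuple $(Q_1,\ldots,Q_n)$; this is immediate because those optimisations only discard tuples for which some $\Lhs_i(f,Q_i)$ is empty, and in our situation $\Lhs_i(f,Q_i)$ contains the transition $f(q_1,\ldots,q_n) \to q$ and is therefore non-empty.
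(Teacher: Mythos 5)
Your proof is correct and is essentially the paper's argument run in the contrapositive direction: the paper assumes a missing transition for $f(Q_1,\ldots,Q_n)$, takes witness terms $t_i \Rightarrow^* Q_i$ from Lemma \ref{dfta-nonempty}, and concludes that $f(t_1,\ldots,t_n)$ would then have no run, contradicting the hypothesis, while you exhibit the required transition directly. Both versions rest on the powerset-construction invariant relating a DFTA run of $t$ to the set of input-FTA states reachable from $t$; the paper leaves that invariant (and the check that the inner-loop optimisations do not drop the relevant tuple) implicit, whereas you correctly flag it and sketch the auxiliary induction needed to discharge it.
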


\begin{proof}
Let $\Q'$ be the set of states of the generated DFTA, and assume that it is not complete. Then there exist $Q_1,\ldots,Q_n \in \Q'$ and an $n$-ary function $f$ such that for all $Q_0$ there is no transition $f(Q_1,\ldots,Q_n) \to Q_0$ in the DFTA.  Let $t_1,\ldots,t_n \in \Term(\Sigma)$ be terms such that $t_i \Rightarrow^* Q_i, 1 \le i \le n$, whose existence is guaranteed by Lemma \ref{dfta-nonempty}.  Since it is a DFTA, there cannot be any other runs $t_i \Rightarrow^* Q_i^{\prime}$ where $Q_i \neq Q_i^{\prime}$ respectively.  Hence there is no run for the term $f(t_1,\ldots,t_n)$, which contradicts the assumption of the lemma.
Hence the generated DFTA is complete.
\qed
\end{proof}
Thus we establish the following.
\begin{proposition}\label{any-complete}
Let $\langle Q,Q_f,\Sigma,\Delta \rangle$ be an FTA, such that $\any \in Q$ and $\Delta_{\any} \subseteq \Delta$.  Then the DFTA  obtained by the determinisation algorithm with this input is complete.
\end{proposition}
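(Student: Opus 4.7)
The plan is to obtain the proposition as an immediate consequence of the two preceding lemmas, so the work is really just to verify that the hypotheses of Lemma \ref{dfta-complete} are satisfied by any input FTA containing $\any$ and $\Delta_{\any}$.

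First I would invoke Definition \ref{delta_any}, where it is already observed that whenever $\any \in Q$ and $\Delta_{\any} \subseteq \Delta$, one has $L(\any) = \Term(\Sigma)$, i.e.\ every ground term $t \in \Term(\Sigma)$ admits a run $t \Rightarrow^* \any$ in the input FTA. (For completeness one could verify this by a routine structural induction on $t$: if $t = f(t_1,\ldots,t_n)$ with $t_i \Rightarrow^* \any$ by the inductive hypothesis, then the transition $f(\any,\ldots,\any) \to \any \in \Delta_{\any}$ yields $t \Rightarrow^* f(\any,\ldots,\any) \Rightarrow \any$; the constant case is the base.) In particular, $q = \any$ witnesses that the hypothesis of Lemma \ref{dfta-complete} holds.

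Second I would apply Lemma \ref{dfta-complete} directly to $\langle Q,Q_f,\Sigma,\Delta\rangle$, which then gives that the DFTA produced by the determinisation algorithm is complete. This is exactly the statement of the proposition, so we are done.

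The proof is therefore essentially a citation chain and I do not expect any real obstacle: the substantive content is already in Lemmas \ref{dfta-nonempty} and \ref{dfta-complete}. The only care needed is to make explicit, via Definition \ref{delta_any}, that the presence of $\any$ and $\Delta_{\any}$ in the input is what supplies the universal-run property required by Lemma \ref{dfta-complete}. No new induction or case analysis beyond what is already established is needed.
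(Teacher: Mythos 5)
Your proposal is correct and follows exactly the paper's own route: note that $\Delta_{\any}$ guarantees a run $t \Rightarrow^* \any$ for every $t \in \Term(\Sigma)$ (as already observed in Definition \ref{delta_any}), then apply Lemma \ref{dfta-complete}. The optional structural induction you sketch is a harmless elaboration of what the paper takes as clear.
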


\begin{proof}
For every term $t \in \Term(\Sigma)$ the FTA has a run $t \Rightarrow^* \any$.  The result follows from Lemma \ref{dfta-complete}.
\qed
\end{proof}

\subsection{Performance of the algorithm after adding $\Delta_{\any}$ }
Although by Proposition \ref{any-complete} we can obtain a complete DFTA from any given input FTA, simply by adding the state \any\ and the transitions $\Delta_{\any}$ to the input before running the algorithm, we may ask what is the impact on the performance of the determinisation algorithm.  

The impact on the number of states of the DFTA is slight; at most one extra state $\{\any\}$ is generated, in the case that there are some terms accepted by \any\ but not by any other state.  This state represents the ``error'' state of the classical completion procedure. Apart from this, the same states are generated but \any\ is added to each one;  it is easy to see that \any\ must appear in every DFTA state since the state \any\ appears in every possible left-hand-side position in $\Delta_{\any}$.  

With standard transitions, completion can cause a huge increase in the size of $\Delta_d$.
The main question is thus the impact on the product representation of $\Delta_d$.  
Let us analyse the effect of adding $\Delta_{\any}$ on part of the algorithm generating the transitions of the DFTA in product form (Figure \ref{fig-trans-3}).
Consider the effect of the introduction of $\Delta_{\any}$ on the basic operations of the algorithm.

\begin{itemize}
\item
$\func^{-1}(f)$ and $\Lhs_i(f,Q)$:  in each case the returned set contains at most one extra transition (that is, the transition $f(\any,\ldots,\any) \to \any$, in the case that $\any \in Q$).
\item
$\Rhs(T)$: the returned set contains at most one extra state \any\ in the case that $T \cap \Delta_{\any} \neq \emptyset$.
\end{itemize}
  
The worst case of the number of extra iterations of the inner loop in Figure \ref{fig-trans-3} caused by the extra state is discussed in Section \ref{complexity}.

In Section \ref{experiments} we verify that the time overhead of adding $\Delta_{\any}$ is small.  The overhead in the number of product transition is discussed in Section \ref{experiments}, and is in practice a small factor. Thus we obtain the completed DFTA at little extra cost over obtaining the DFTA.

\subsection{Don't-Care Arguments in Complete DFTAs}\label{dontcare}

An underscore ``\_'' is used as shorthand for $\Q_d$ in a product transition; that is, $f(\Q_1,\ldots,\_, \ldots,\Q_n) \to Q_0$ is the same as $f(\Q_1,\ldots,\Q_d, \ldots,\Q_n) \to Q_0$. This indicates that the choice of DFTA state in this argument is irrelevant in determining the right hand side $Q_0$. Product transitions of the form $f(\_,\ldots,\Q_i,\ldots, \_) \to Q_0$ in which all but one argument are don't-care arguments are especially interesting, since the right-hand-side of the transition is determined by just one argument.  We call the elements of such arguments \emph{deciding arguments}.  

A typical case of deciding arguments arises in complete DFTAs constructed by adding $\Delta_{\any}$ to the original FTA, where a state $\{\any\}$ is generated, which accepts the terms not accepted by any other state.  $\{\any\}$ is a deciding argument; the presence of $\{\any\}$ in any argument in a DFTA transition is sufficient to ensure that the right hand side of the transition is $\{\any\}$. That is, there are DFTA product transitions of the form:
\[
\begin{array}{c}
f(\{\{\any\}\},\_,\ldots,\_,\_) \to \{\any\}\\
f(\_,\{\{\any\}\},\_,\ldots,\_) \to \{\any\}\\
\ddots\\
f(\_,\_,\ldots,\_,\{\{\any\}\}) \to \{\any\}\\
\end{array}
\]
\noindent
These product transitions overlap, obviously, since $\{\any\}$ is included in the don't-care arguments.  However, this form might be much more compact than the product transitions generated by the determinisation algorithm.  Furthermore, there could be other deciding arguments besides $\{\any\}$.

We prove two lemmas defining sufficient conditions for finding deciding arguments and generating the corresponding don't-care product transitions.

\begin{lemma}\label{deciding} Let $\Q_d$ be the set of states of a compete DFTA and let
 $\Psi_1,\ldots,\Psi_n = \fmap_1(f,\Q_d),\ldots,  \fmap_n(f,\Q_d)$ for some $n$-ary function $f$. Let $\Delta' \in \Psi_i$ and $\Q_i = \fmap^{-1}_i(f,\Q_d,\Delta')$.  Then
$\Q_i$ are \emph{deciding arguments} for the $i^{th}$ argument of $f$ if
\[
 \Rhs(\Delta'\cap \bigcap(\cap \Psi_1,\ldots,\cap \Psi_{i-1},\cap \Psi_{i+1},\ldots,\cap \Psi_n)) = \Rhs(\Delta').
\]

\end{lemma}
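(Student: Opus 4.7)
The plan is to show that for every tuple $(Q_1,\ldots,Q_n) \in \Q_d^{n}$ with $Q_i \in \Q_i$ and $Q_j$ arbitrary in $\Q_d$ for $j\neq i$, the unique transition of the DFTA of the form $f(Q_1,\ldots,Q_n)\to Q_0$ satisfies $Q_0 = \Rhs(\Delta')$. Establishing this is exactly what is needed to validate the product transition $f(\_,\ldots,\Q_i,\ldots,\_) \to \Rhs(\Delta')$, i.e.\ to identify $\Q_i$ as a deciding set for position $i$.

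First I would recall, via Lemma \ref{prop1} and the construction of the DFTA, that the right-hand side of the DFTA transition for the chosen tuple is $Q_0 = \Rhs(\Lhs_1(f,Q_1) \cap \cdots \cap \Lhs_n(f,Q_n))$. By the definition of $\fmap^{-1}_i$, every $Q_i \in \Q_i$ satisfies $\Lhs_i(f,Q_i) = \Delta'$. For each $j \neq i$, completeness of the DFTA forces $\Lhs_j(f,Q_j)$ to be non-empty: otherwise the intersection would be empty, the formula would yield $Q_0 = \emptyset$, and by the convention of the algorithm no transition would be generated at that tuple, contradicting completeness. Hence $\Lhs_j(f,Q_j) \in \Psi_j$ for each $j \neq i$, and therefore $\cap \Psi_j \subseteq \Lhs_j(f,Q_j)$ in each such position.

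Now set $X = \Delta' \cap \bigcap_{j\neq i}(\cap \Psi_j)$ and $Y = \Delta' \cap \bigcap_{j\neq i}\Lhs_j(f,Q_j)$. The previous paragraph gives the chain of inclusions $X \subseteq Y \subseteq \Delta'$. Applying monotonicity of $\Rhs$ yields $\Rhs(X) \subseteq \Rhs(Y) \subseteq \Rhs(\Delta')$, and the hypothesis of the lemma is precisely $\Rhs(X) = \Rhs(\Delta')$. Squeezing gives $Q_0 = \Rhs(Y) = \Rhs(\Delta')$, which is the same right-hand side for every admissible tuple, as required. The only subtle point, and the main obstacle, is the appeal to completeness to place $\Lhs_j(f,Q_j)$ in $\Psi_j$; once that membership is secured the remainder is a straightforward monotonicity argument using nothing beyond the definitions of Section \ref{algorithm}.
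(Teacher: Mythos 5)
Your proof is correct and is essentially the paper's own argument: both hinge on the same sandwich $\Delta'\cap \bigcap_{j\neq i}(\cap \Psi_j) \subseteq \Delta_1 \cap \cdots \cap \Delta' \cap \cdots \cap \Delta_n \subseteq \Delta'$ followed by monotonicity of $\Rhs$ and the hypothesis to squeeze the middle term to $\Rhs(\Delta')$. The one point where you go slightly beyond the paper is the explicit appeal to completeness to show that $\Lhs_j(f,Q_j)$ is non-empty and hence lies in $\Psi_j$ for an \emph{arbitrary} state $Q_j\in\Q_d$ --- the paper quantifies directly over tuples drawn from $\Psi_1\times\cdots\times\Psi_n$ and leaves that bridging step implicit, so your version is marginally more careful but not a different proof.
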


\begin{proof}

Consider the set of right hand sides of transitions that can be built from $(\Psi_1\times \cdots \times \Psi_n)$ using $\Delta'$ in the $i^{th}$ position, say $\bar{Q}$. That is, $\bar{Q} = \Rhs(\Delta_1 \cap \cdots \cap  \Delta' \cap \cdots \cap \Delta_n)$ where $(\Delta_1, \ldots,\Delta',\ldots, \Delta_n) \in (\Psi_1\times \cdots \times \{\Delta'\} \times \cdots \Psi_n)$ with $\Delta'$ in the $i^{th}$ position.

$\Rhs(\Delta')$ is an upper bound for $\bar{Q}$, since $\Delta_1 \cap \ldots,\Delta' \cap \ldots \cap \Delta_n \subseteq \Delta'$ and $\Rhs$ is monotonic.  The expression 
$ \Rhs(\Delta'\cap \bigcap(\cap \Psi_1,\ldots,\cap \Psi_{i-1},\cap \Psi_{i+1},\ldots,\cap \Psi_n))$ is a lower bound for $\bar{Q}$, since $\Delta'\cap \bigcap(\cap \Psi_1,\ldots,\cap \Psi_{i-1},\cap \Psi_{i+1},\ldots,\cap \Psi_n) \subseteq \Delta_1 \cap \cdots \cap  \Delta' \cap \cdots \cap \Delta_n$ and $\Rhs$ is monotonic.  If these two are equal, as in the statement of the property, then we can conclude that the value $\Rhs(\Delta')$ is the right hand side for any such transition since it is both an upper and a lower bound. 

\qed
\end{proof}
If we can find such a $\Delta' \in \fmap_i(f,\Q_d)$, a (product) transition $f(\ldots,\_,\Q_i,\_,\ldots) \to \Rhs(\Delta')$ is constructed, where $\Q_i=\fmap^{-1}_i(f,\Q_d,\Delta')$ and the underscore arguments stand for $\Q_d$.

A more specialised sufficient condition for deciding arguments for binary functions is given by the following lemma.

\begin{lemma}\label{deciding2} Let $\Q_d$ be the set of states of a complete DFTA and let
 $\Psi_1,\Psi_2 = \fmap_1(f,\Q_d), \fmap_2(f,\Q_d)$ for some $2$-ary function $f$. Then a 
 set of DFTA states $\Q_i \subseteq \Q_d$, where $\Q_i = \fmap^{-1}_i(f,\Q_d,\Delta')$ for some $\Delta' \in \Psi_i$, $i \in \{1,2\}$ are \emph{deciding arguments} for the $i^{th}$ argument of $f$ if $\Rhs(\Delta')$ is a singleton, and for all $\Delta'' \in \Psi_j$, $j \in \{1,2\} \setminus \{i\}$,  
\[
\Delta'\cap \Delta'' \neq \emptyset.
\]

\end{lemma}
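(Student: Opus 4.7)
The plan is to prove the lemma directly rather than reducing it to Lemma~\ref{deciding}. The earlier lemma requires a statement about the intersection of \emph{all} elements of $\Psi_j$ at once, which is strictly stronger than the pairwise condition assumed here (pairwise non-empty intersections do not imply that the total intersection is non-empty). Instead, the core of the argument is a short computation about $\Rhs$, combined with completeness of the DFTA to exhaust the possible $Q_j$ that can occur in the $j^{th}$ argument position.

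First I would establish the key equality: for every $\Delta'' \in \Psi_j$,
\[
\Rhs(\Delta' \cap \Delta'') = \Rhs(\Delta').
\]
The inclusion $\subseteq$ is immediate from monotonicity of $\Rhs$ applied to $\Delta' \cap \Delta'' \subseteq \Delta'$. For the reverse, the hypothesis $\Delta' \cap \Delta'' \neq \emptyset$ guarantees that $\Rhs(\Delta' \cap \Delta'')$ is a non-empty subset of $\Rhs(\Delta')$; since $\Rhs(\Delta')$ is assumed to be a singleton, any non-empty subset must coincide with it.

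To conclude, fix any $Q_i \in \Q_i$ (so $\Lhs_i(f, Q_i) = \Delta'$) and any $Q_j \in \Q_d$. Because the DFTA is complete, there must be a DFTA transition whose left-hand side has $Q_j$ in the $j^{th}$ position; by the determinisation construction this forces $\Lhs_j(f, Q_j) \neq \emptyset$, and hence $\Lhs_j(f, Q_j) \in \Psi_j$. Applying the key equality with $\Delta'' := \Lhs_j(f, Q_j)$ shows that the unique DFTA transition whose left-hand side pairs $Q_i$ with $Q_j$ has right-hand side $\Rhs(\Delta')$, independently of the choice of $Q_j \in \Q_d$. This is exactly the definition of deciding arguments, and justifies emitting the product transition $f(\ldots,\Q_i,\ldots) \to \Rhs(\Delta')$ with a don't-care in the $j^{th}$ position.

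The main obstacle is conceptual rather than technical: one must resist the reflex to invoke Lemma~\ref{deciding}, since its hypothesis is genuinely stronger than what we have. The singleton hypothesis on $\Rhs(\Delta')$ is precisely what compensates, collapsing the lattice of candidate right-hand sides so that any non-empty $\Rhs(\Delta' \cap \Delta'')$ is forced to equal $\Rhs(\Delta')$. Completeness of the DFTA enters only at the last step, to ensure that no $Q_j \in \Q_d$ slips outside the range of the pairwise condition.
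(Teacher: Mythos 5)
Your proposal is correct and its core is the same two-step computation as the paper's own proof: $\Delta'\cap\Delta''\neq\emptyset$ forces $\Rhs(\Delta'\cap\Delta'')$ to be a non-empty subset of the singleton $\Rhs(\Delta')$, hence equal to it. The additional material you supply --- observing that Lemma~\ref{deciding} cannot simply be invoked because its hypothesis is strictly stronger, and using completeness to show every $Q_j\in\Q_d$ yields a non-empty $\Lhs_j(f,Q_j)\in\Psi_j$ --- correctly fills in steps the paper leaves implicit, but does not constitute a different approach.
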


\begin{proof}
$\Delta'\cap \Delta'' \neq \emptyset$ implies that $\Rhs(\Delta_j \cap  \Delta') \neq \emptyset$.  Since $\Rhs(\Delta')$ is a singleton and $\Rhs(\Delta'' \cap  \Delta') \subseteq \Rhs(\Delta')$,  $\Rhs(\Delta'' \cap \Delta') = \Rhs(\Delta')$.  
\end{proof}
If such a $\Delta'$ is found, say in argument 2, we generate the product transition $f(\_,\Q_2) \to \Rhs(\Delta')$ where $\Q_2=\fmap^{-1}_2(f,\Q_d,\Delta')$ and the underscore arguments stand for $\Q_d$.

We can easily add a check for deciding arguments using the conditions of Lemma \ref{deciding} and \ref{deciding2} to the algorithm, just before generating product transitions. The calculation of the intersections is exponential in the arity of the function symbols, but does not alter the complexity of the overall algorithm and can save effort in generating product transitions. For each set of deciding arguments $\Q_i$ discovered, we  generate a don't-care product transition of the form just shown, 
and the corresponding value $\Delta'$ is removed from $\Psi_i$ when computing the remaining product transitions for $f$.

The problem of finding the minimum number of product transitions to represent the DFTA transitions seems to be intractable and is beyond the scope of this paper.  In essence it can be stated as the problem of finding the minimum number of cartesian products whose union is a given relation. % end input /Users/jpg/Research/Papers/DFTA-JLAMP/Revision/completion.tex
 %
% start input /Users/jpg/Research/Papers/DFTA-JLAMP/Revision/experiments.tex
\section{Experiments}\label{experiments}

Tables \ref{results1} and \ref{results2} show experimental results comparing the optimised determinisation algorithms generating transitions in product form with the textbook algorithm.  It also compares the effect of adding the detection of don't care arguments in the determinisation algorithm.  The algorithms are implemented in Java;\footnote{The code is available at \url{https://github.com/bishoksan/DFTA}} the textbook algorithm is a fairly direct implementation of the program in Figure \ref{fig-alg4}. Our own implementation of the textbook algorithm could be improved but the same overall performance is likely in terms of the number of solvable problems, since the size of the output set of transitions is often the limiting factor. In Section \ref{comparison}, we compare with other implementations of FTA operations, where we find that our implementation of the textbook algorithm seems comparable in performance with existing determinisation tools.

The 14,694 benchmark FTAs were obtained from the repository that is part of libvata.  Many of these FTAs originate in the Timbuk system \cite{GenetT01} and others from the abstract regular tree model checking experiments (ARTMC) \cite{DBLP:conf/sas/BouajjaniHRV06}.  All experiments were carried out on a computer running Linux with 4-core Intel\textsuperscript{\textregistered} Xeon\textsuperscript{\textregistered} X5355 processors running at 2.66GHz.

\subsection{Determinisation and completion}

The columns in Table \ref{results1} show the overall effectiveness of three versions of the determinisation algorithm, for determinisation and determinisation with completion ({\rm +compl}).  DFTA is the textbook algorithm generating all transitions explicitly, DFTA-opt is the optimised algorithm returning product form, without detection of don't care arguments, and DFTA-opt-dc is with detection of don't cares.

\begin{table}[t]
\newcolumntype{L}[1]{>{\raggedright\arraybackslash}p{#1}}
\newcolumntype{C}[1]{>{\centering\arraybackslash}p{#1}}
\newcolumntype{R}[1]{>{\raggedleft\arraybackslash}p{#1}}

\begin{tabularx}{\textwidth}{|L{0.13\textwidth}|*{6}{R{0.1028\textwidth}|}}
\hline
~& \multicolumn{2}{c|}{\rm DFTA} & \multicolumn{2}{c|}{ \rm DFTA-opt}  &\multicolumn{2}{c|}{ \rm DFTA-opt-dc}        \\ 
 \hline
~			&      			& {\rm +compl~} &     		& {\rm +compl~} & 		&  {\rm +compl~ }          \\ \hline
solved             	& 112      		& 109         	& 14672        	& 14670         	&14672	& 14669      	\\ \hline
t/o           		& 14584    	& 14587       	& 22           	& 24              	&22      	&  25    		 \\ \hline
avg. secs. 	& 119.9     	& 119.9        	& 0.30         	& 0.37        	&0.30	&  0.34       	\\ \hline
\% solved          & 0.76\%     	& 0.74\%    	& 99.85\%       	& 99.84\%        	&99.85\% 	&99.83\%        	\\ \hline
\end{tabularx}
    \vskip 0.5cm
\caption {Average time (in seconds) for determinisation of 14,694 benchmark problems, with and without completion (timeout 120 seconds). DFTA = textbook algorithm; DFTA-opt = optimised algorithm without don't care detection;  DFTA-opt-dc =  optimised algorithm with don't care detection}\label{results1}
\end{table}

The first notable point is that the textbook algorithm is able to solve less than 1\% of the problems while the optimised algorithms solve nearly all of them.  The running time of the textbook algorithm is far slower even considering only those problems that it could solve.

The size of the input and output DFTAs is summarised in Table \ref{results2}. The size of the set of states and transitions of the input FTA are $|Q|$ and $|\Delta|$.  The number of DFTA states is $|Q_d|$ and the number of product transitions in the DFTA is $|\Delta_{\Pi}|$.  For completed DFTAs,  the precise size of the set of transitions $\Delta_d$ depends only on the signature and can be calculated; this is shown in the table as $|\Delta_d|$.  For non-complete DFTAs we cannot usually directly enumerate $\Delta_d$ since it is too large. However, we estimate its size by summing the product of the sizes of the product states in each transition.  (However, the same transition could be represented by more than one product transition so it is an over-estimate and we show it in brackets in Table \ref{results2}). The comparison of the sizes and their distribution of $\Delta_d$ and $\Delta_{\Pi}$ is shown in the scatter-plots in Figure \ref{fig-chart2}. Note that the scale on the vertical axes is logarithmic, and that while the majority of the benchmarks have $\Delta_d$ with size over $10^{12}$, the sizes of $\Delta_{\Pi}$ lie between 10 and $10^4$.  This is a dramatic difference.

It can immediately be seen that the number of DFTA states is on average only slightly greater than the number of input FTA states, as discussed in Section \ref{complexity}. The average size of the set of transitions in completed DFTAs is extremely large, and shows immediately why the textbook algorithm fails.  Regarding completion, the size of the set of states of both input and output automata is increased by one.  The size of the input set of transitions is increased by the size of $\Delta_{\any}$ which is $|\Sigma|$.

Determinisation with completion usually results in a larger set of product transitions than determinisation alone.  Without don't cares, the average increase is a factor of about 9 (13,908 compared to 1,563). With don't cares, the average increase factor is only about 2.5. However, in the optimised algorithms, completion hardly increases the running time, nor is detection of don't cares a significant overhead.

\begin{table}[t]
\newcolumntype{L}[1]{>{\raggedright\arraybackslash}p{#1}}
\newcolumntype{C}[1]{>{\centering\arraybackslash}p{#1}}
\newcolumntype{R}[1]{>{\raggedleft\arraybackslash}p{#1}}

\begin{tabularx}{\textwidth}{|L{0.189\textwidth}|*{4}{R{0.1575\textwidth}|}}
\hline
~&  \multicolumn{2}{c|}{ \rm DFTA-opt}   &\multicolumn{2}{c|}{ \rm DFTA-opt-dc}        \\ 
\hline
~						&            		& {\rm +compl~} 		& 		&  {\rm +compl~ }          	\\  \hline
average  $|Q|$	                 	& 58.67     	& 58.67           			& 58.67    	& 58.52         			\\ \hline
average  $|\Delta|$            	& 293.58 		& 307.66       			& 293.58 	& 306.23      			 \\ \hline
average  $|\Sigma|$            	& 15.5 		& 15.5 				& 15.5 	& 15.5    				 \\ \hline
average  $|Q_d|$	                 & 110.10     	& 102.08           		& 110.18   & 96.55           			 \\ \hline
average  $|\Delta_d|$           	& ($2.23\times 10^{6}$)& $2.93\times 10^{18}$      & ($2.25\times 10^{6}$)	& $2.93 \times  10^{18}$      \\ \hline
average  $|\Delta_{\Pi}|$     	& 1563.13   	& 13908.79       		& 1565.58   & 3817.90         			\\ \hline
\end{tabularx}
    \vskip 0.5cm
\caption {Size statistics for input and output of optimised DFTA algorithms (solved problems only).}\label{results2}

\end{table}

\begin{figure}
\begin{center}
\includegraphics[width=3.5 in]{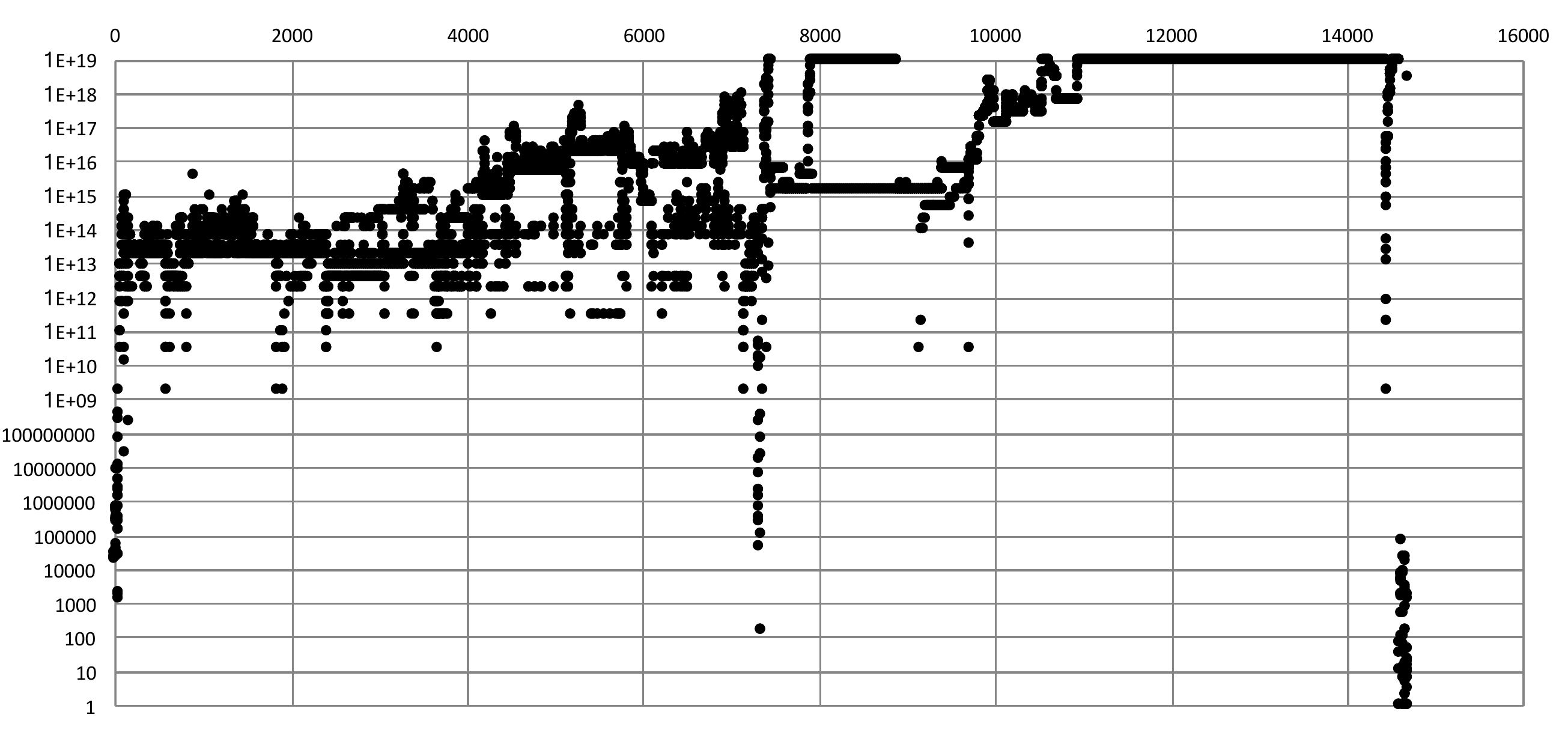}
\includegraphics[width=3.5 in]{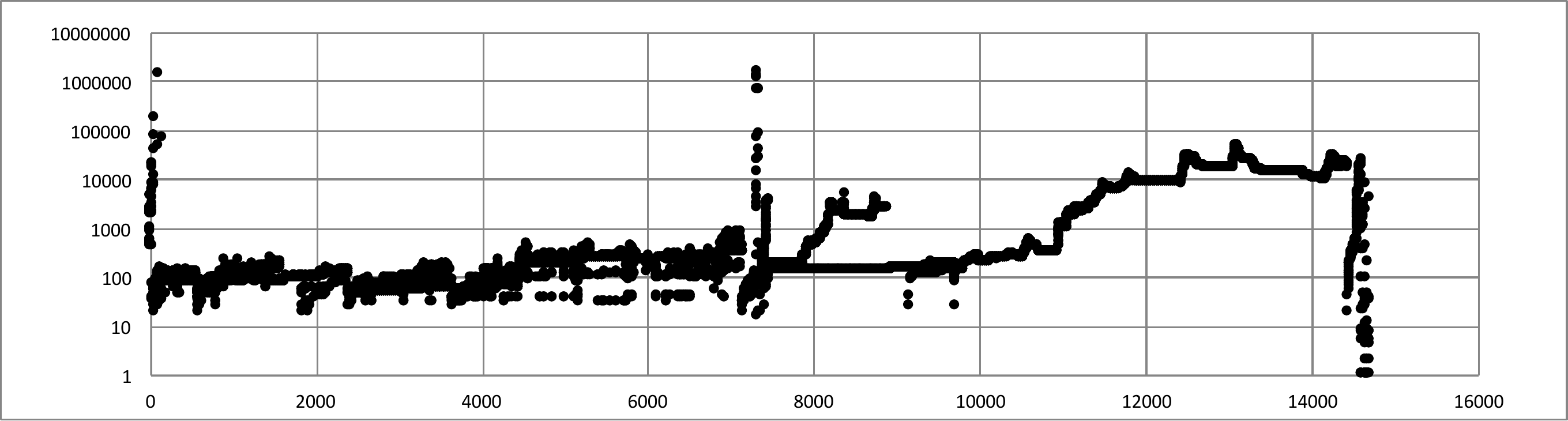}
\end{center}
\caption{Distribution of size of $\Delta_d$ and $\Delta_{\Pi}$  for 14,672 complete determinised automata. }
\protect\label{fig-chart2}
\end{figure}

\subsection{Comparison with other FTA implementations}\label{comparison}

We have investigated the implementation of determinisation and completion (or closely related operations) in other available FTA libraries, namely Timbuk\footnote{\url{http://people.irisa.fr/Thomas.Genet/timbuk/TimbukOnLine/}} \cite{GenetT01}, MONA\footnote{\url{http://www.brics.dk/mona/}} \cite{MONA14}, the VATA tree automaton library\footnote{\url{http://www.fit.vutbr.cz/research/groups/verifit/tools/libvata/}} \cite{LengalSV12} and Autowrite\footnote{\url{http://dept-info.labri.fr/~idurand/trag/}} \cite{DBLP:journals/entcs/Durand05}.

Timbuk provides a determinisation operation, which however is not optimised since determinisation plays no role in the main applications areas of Timbuk, which are mainly the verification of cryptographic protocols\footnote{T. Genet. Personal communication}. We have not made an extensive comparison since the implementation seem to perform at best the same as our textbook implementation. Using the online version of Timbuk, the complementation of a medium-sized example with $|Q|=53$ and $|\Delta|=174$ runs out of resources, while our textbook implementation takes 7.5 seconds to produce the result with $|\Delta_d|=23,535$.
On the other hand, the optimised algorithm DFTA-opt computes the result in 50 milliseconds, yielding 501 product transitions.

MONA uses a technique called ``guided tree automata'' to optimise operations on tree automata. Guided tree automata are applied to FTAs that are already determinised by the user, and thus MONA does not explicitly offer a determinisation operation. Certain operations in MONA (for example projection) make the automata temporarily non-deterministic, and so these operations incorporate a determinisation operation to restore deterministic form. However, this operation handles only the restricted form of non-deterministic FTA that can arise, and is not directly accessible for testing, and so we are unable to make any direct comparison of our algorithm with MONA. 

We made more extensive comparisons with VATA and Autowrite.
The VATA library contains an FTA complementation operation.  Complementation in VATA is not carried out by the classical procedure of determinisation and completion, followed by switching the accepting and non-accepting states of the output DFTA. However, it is the closest comparable operation. We took a set of 93 FTAs from the ARTMC system (Abstract Regular Tree Model Checking) benchmarks from the VATA benchmark library (which were included in the 14,672 previously mentioned). This is the set in which most of the timeouts of our optimised algorithm occur and therefore provide the most challenging comparison.   Figure \ref{fig-chart3} shows the results of running VATA,  DFTA and DFTA-opt (with completion) on these, with a timeout of 120 seconds.  It can be seen that DFTA-opt solves about 75\% of these, the majority of them fast, while VATA succeeds on only 5 within the time limit, comparable to our implementation of the textbook algorithm, which solves 15 within the time limit.   In summary, from this sample it appears that VATA's complementation performs at best as well as the textbook algorithm, but worse than the optimised algorithm for determinisation with completion.

The Autowrite tool is part of the TRAG system providing a library for analysis of term rewriting systems and graphs.  Tree automata operations play an important role and thus the implementation of tree automata has been done with care \cite{DBLP:journals/entcs/Durand05}. The operations include determinisation and complementation.   However, complementation does not generate all the transitions explicitly and the computation is essentially the same as determinisation. Therefore we compared with our optimised algorithm for determinisation without completion. On the same set of 93 problems used for comparison with VATA, Autowrite's determinisation performance is comparable to our implementation of the textbook algorithm, solving 17 out of 93 within the 120 second time limit. The optimised algorithm for determinisation succeeded in 83 of the examples within the limit, with an average time (for the solved problems) of 17 seconds.

\begin{figure}
\begin{center}
\includegraphics[width=3 in]{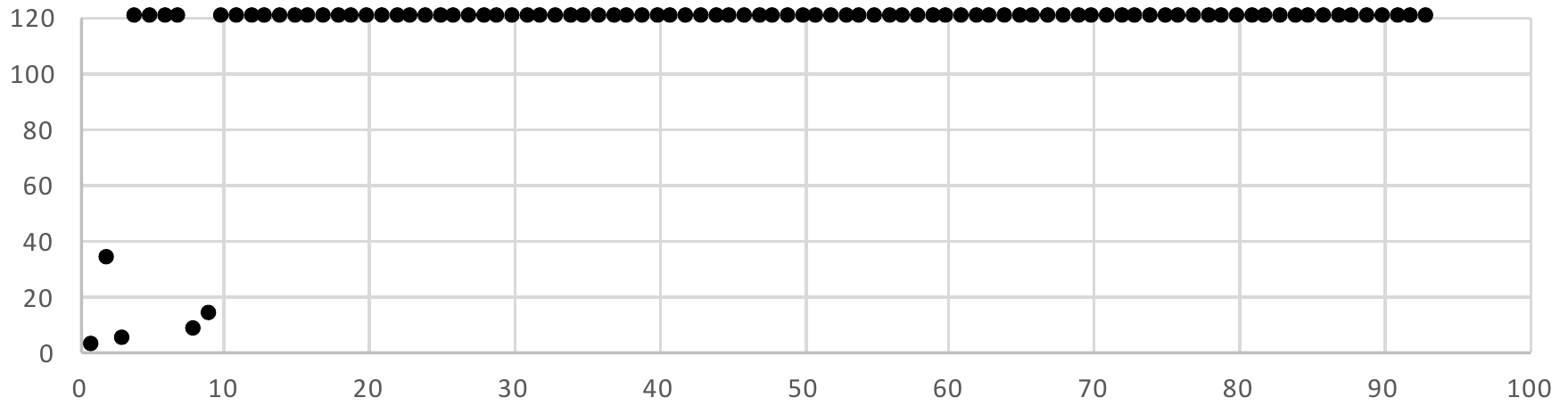}
\vskip 0.5cm
\includegraphics[width=3 in]{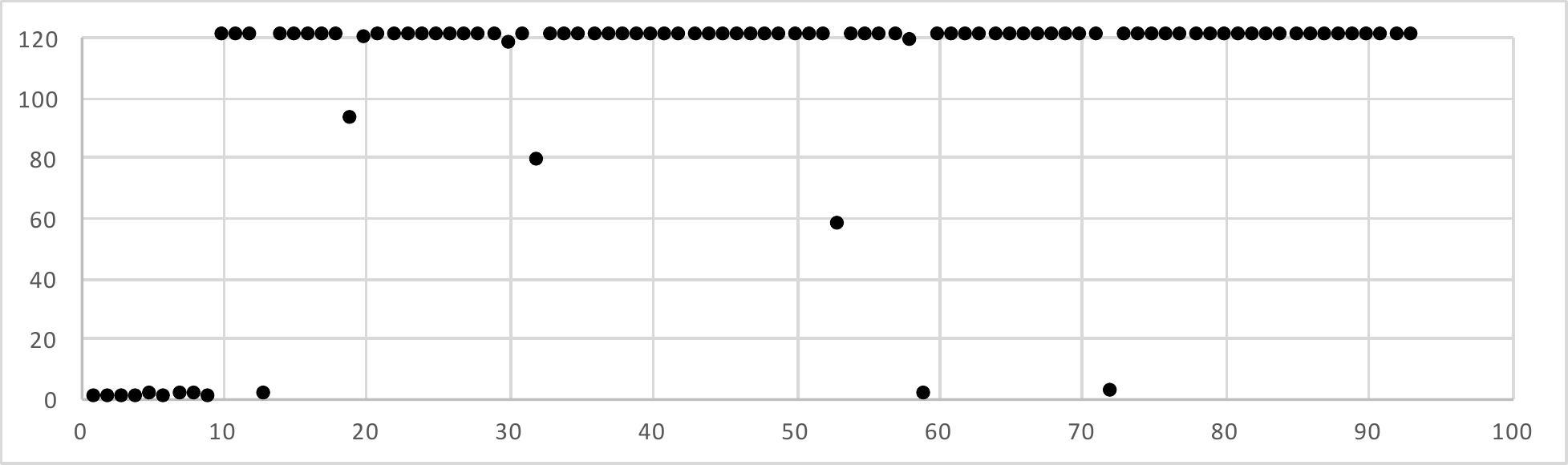}
\vskip 0.5cm
\includegraphics[width=3 in]{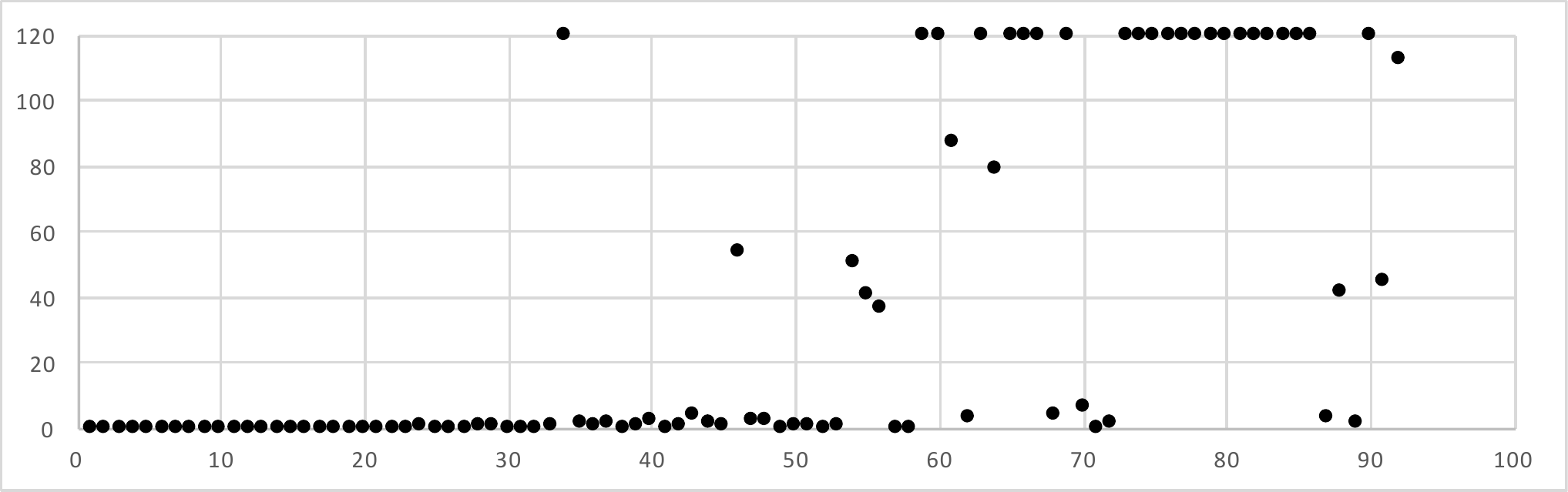}
\end{center}
\caption{Comparison of time in seconds for complementation of 93 ARTMC benchmarks (timeout 120 secs.) with VATA (upper), DFTA-textbook (middle) and  DFTA-opt (lower) }
\protect\label{fig-chart3}
\end{figure}

\subsection{Performance on random FTAs}

Figure \ref{fig-chart5} shows the result of running DFTA-opt on a set of 360 random FTAs.\footnote{Obtained using a generator kindly supplied by R. Mayr and R. Almeida}
These FTAs have 20 states, function symbols with arity 0, 2, 3 or 4, a transition density (the average number of different right-hand side states for a given
left-hand side of a transition rule) ranging between 1.5 and 1.7 and an acceptance density (proportion of accepting states) ranging from 0.5 to 0.8. These were more challenging examples for all the algorithms tested. DFTA-opt solved 11.1\% within the timeout as shown in Figure \ref{fig-chart5}. The result for VATA's complementation are not shown, but only 1.1\% of the FTAs were complemented by VATA within the timeout.

The reason for the relatively poor performance of our algorithm (though still much better than the state of the art) on random FTAs is that the optimised algorithm's success depends to some extent on function symbols having a natural ``typing''.  That is, a given state occurs in relatively few argument positions of a function symbol.  This keeps the size of sets manipulated in the algorithm relatively small.  In addition, as noted earlier, in many naturally occurring FTAs the language associated with two states $q_1$ and $q_2$, that is, $L(q_1)$ and $L(q_2)$, tend to be either disjoint or one is included in the other.  This keeps the number of determinised states down.  Neither of these properties can be expected to hold in random FTAs.

\begin{figure}
\begin{center}
\includegraphics[width=3.5 in]{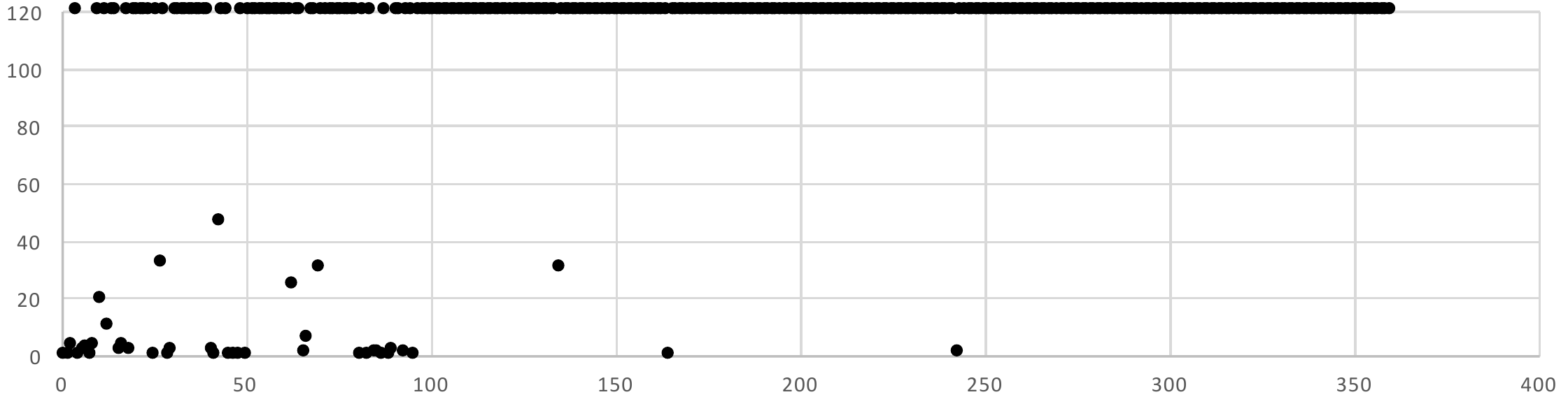}
\end{center}
\caption{Time in seconds for determinisation and completion using DFTA-opt on 360 random FTAs (timeout 120 secs.). }
\protect\label{fig-chart5}
\end{figure}

 % end input /Users/jpg/Research/Papers/DFTA-JLAMP/Revision/experiments.tex
 %
% start input /Users/jpg/Research/Papers/DFTA-JLAMP/Revision/applications.tex
\section{Applications}\label{applications}

We discuss two kinds of application: those where we need the set of transitions and those where we do not.  For the first kind, 
 applicability depends partly on whether the product form of transitions is directly usable.  The product form is of little ultimate use if the transitions need to be explicitly enumerated in order to use them.  Fortunately, it appears that the product form can often be efficiently processed directly.  
 
 \subsection{FTA operations}
 Firstly, we note that other FTA operations can be adapted to handle product form directly, rather than expanding the products. For example, the intersection of FTAs, whose transitions are in product form, is easily performed. Let  $A^1, A^2$ be FTAs $(Q_1, Q_1^{f}, \Sigma, \Delta_1)$ and $ (Q_2, Q_2^{f}, \Sigma, \Delta_2)$ respectively, where $\Delta_1$ and $\Delta_2$ are given in product form.  Then the intersection $A^1 \cap A^2$ is given by the FTA 
 $$(Q_1 \times Q_2, Q_1^{f} \times Q_2^{f}, \Sigma, \Delta_1 \times \Delta_2)$$
 \noindent
 where the transition product $\Delta_1 \times \Delta_2$ is the following set of product transitions.
 \begin{multline*}
\{f(R_1 \times S_1,\ldots,R_n \times S_n) \rightarrow (q_1,q_2) \mid \\
f(R_1,\ldots,R_n) \rightarrow q_1 \in \Delta_1, f(S_1,\ldots,S_n) \rightarrow q_2 \in \Delta_2\}\enspace.
\end{multline*}

The FTA minimisation procedure is defined on DFTAs and the operation is $O(n^2)$ where $n$ is the size of the input DFTA \cite{DBLP:conf/wia/CarrascoDF07}.  Given that our experiments show dramatic reduction of the size of DFTA when transitions are represented in product form, the adaptation of the minimisation algorithm to product form could greatly improve the scalability of the algorithm.  We have performed preliminary experiments which confirm this expectation, but a detailed study of DFTA minimisation in product form is future work.

The difference of two FTAs can also be computed and represented in product form. 
The difference of two FTAs can be computed as follows \cite{DBLP:journals/cl/KafleG17}.
Let  $A^1, A^2$ be FTAs $(Q^1, Q^1_{f}, \Sigma, \Delta^1)$ and $ (Q^2, Q^2_{f}, \Sigma, \Delta^2)$ respectively, where we assume without loss of generality that $Q^1$ and $Q^2$ are disjoint. Let $A^{1 \cup 2}$ be the union FTA $(Q^1 \cup Q^2, Q^1_{f} \cup Q^2_{f}, \Sigma, \Delta^1 \cup \Delta^2)$ and let 
 $(\Q',\Q'_f,\Sigma,\Delta')$ be the determinisation of $A^{1 \cup 2}$.  Let $\Q^2 = \{ Q' \in \Q' \mid Q' \cap  Q^2_{f} \neq \emptyset \}$.  Then $A^1 \setminus A^2 = (\Q', \Q'_f \setminus \Q^2, \Sigma,\Delta')$.  Informally, the difference is constructed by computing the automaton accepting any term in the union $L(A^1) \cup L(A^2)$ and then excluding any accepting state that can accept an element of $L(A^2)$.

 \subsection{Applications in program verification and analysis}
Gallagher \emph{et al.} \cite{Gallagher-Henriksen-ICLP04,Gallagher-Henriksen-Banda-2005} showed that program properties of interest in analysis of logic program can be formulated as sets of terms defined by an FTA on the program signature and that a precise abstract domain for static analysis could then be constructed by determinising and completing the FTA.  The algorithm presented in detail here was first developed in the context of that work.  The approach was made practical by encoding the product transitions directly as binary decision diagrams (BDDs), via a translation to Datalog form \cite{Ullman-Vol1} thus avoiding the need to enumerate transitions explicitly.

Recently the  FTA difference construction mentioned above was used to implement a refinement procedure for Horn clause verification \cite{DBLP:journals/cl/KafleG17}.  An FTA is built to represent the set of derivations in a given set of Horn clauses.  Infeasible traces can be eliminated from the FTA by application of the automata difference algorithm, which is then used to construct a new set of Horn clauses in which the infeasible trace cannot arise.

Determinisation and completion are usually needed to form the complement of an FTA, though other algorithms exist, for example implemented in the VATA library to which we compared in Section \ref{experiments}.  Applications in verification and analysis using tree automata to represent set of states could benefit from the availability of a practical complementation algorithm, e.g. \cite{BRICS-EP-00-SME_CTDC,Monniaux-SCP,GenetT01,FeuilladeGT04,BallandBEG08}.  Note that we obtain the complement with its transitions in product form, simply by switching the accepting and non-accepting states of the completed DFTA, and therefore further exploitation depends on avoiding the need to expand the product transitions.

\subsection{Applications not using the set of transitions}
For the second class of applications, that is, those for which transitions are not needed, we note that
DFTA states encode useful information in themselves, making use of the disjointness of the sets $\{L(Q_i) \mid Q_i \mathrm{~is~a ~DFTA~state}\}$ and Lemma \ref{dfta-nonempty}. For such applications, the optimised algorithm could be terminated without executing the transition generation step. We identify three applications referring only to the DFTA states.

\begin{enumerate}
\item
Firstly, an FTA $A$ over a signature $\Sigma$ is \emph{universal} if $\Lang(A) = \Term(\Sigma)$.  This is true if in the determinised, completed DFTA,  every state is an accepting state, that is, it intersects with the set of accepting states of $A$. Note that a version of the algorithm checking universality could be terminated with a negative answer as soon as a counterexample state was generated. 

\item
Secondly, the  \emph{emptiness of intersections} of input FTA states can be checked using the DFTA states.  Let $q_1,\ldots,q_n$ be FTA states. Then $L(q_1) \cap \ldots \cap L(q_n)$ is nonempty if and only if the corresponding DFTA includes a state containing $q_1,\ldots,q_n$.  

\item
Finally we consider the problem of \emph{FTA inclusion}. The classical approach to checking containment of FTA $A_1$ in FTA $A_2$ is to check the emptiness of $A_1 \cap \bar{A_2}$ which involves the complementation of $A_2$. The procedure for constructing the difference FTA given above also functions as an inclusion check. That is, to check whether $L(A_1) \subseteq L(A_2)$ construct the states of the difference automaton $A_2 \setminus A_1$, as described above. Then $L(A_1) \subseteq L(A_2)$ if and only if the set of accepting states in the DFTA of $A_2 \setminus A_1$ is empty.  As with universality checking, the algorithm can terminate with a negative answer as soon as a counterexample state is generated.  
Other containment algorithms not requiring complementation have been presented \cite{SudaH05,HosoyaVP05} and an algorithm based on antichains \cite{DBLP:conf/wia/BouajjaniHHTV08}. 
The FTA inclusion-checking algorithms  based on antichains \cite{DBLP:conf/wia/BouajjaniHHTV08} are in general more efficient, but our experiments show that the DFTA-based algorithm is comparable for cases where the answer is negative (in other words, a counterexample is found fast). However, in general the antichain algorithm is much more effective when inclusion holds.

\end{enumerate}
 % end input /Users/jpg/Research/Papers/DFTA-JLAMP/Revision/applications.tex

%
% start input /Users/jpg/Research/Papers/DFTA-JLAMP/Revision/related.tex
\section{Discussion and Related Work}\label{related}

The algorithm presented in this paper was sketched by Gallagher \emph{et al.} in \cite{Gallagher-Henriksen-Banda-2005}, including the concept of product transitions. Otherwise, we do not know of other attempts to design practical algorithms for determinisation.  Previous work that used tree automata as a modelling formalism commented on the impracticality of handling complementation, due to the complexity of the determinisation and completion algorithm \cite{Monniaux-SCP,Heintze-Dagstuhl}. Available libraries for tree automata manipulation seem to implement the textbook algorithm \cite{GenetT01,MONA14,LengalSV12} for determinisation.  It would be interesting to investigate whether the technique of ``guided tree automata'' in MONA could be modified to handle product transitions.

Efficient algorithms for FTA inclusion checking that avoid the need for determinisation have been developed \cite{DBLP:conf/wia/BouajjaniHHTV08}, but the operation of determinisation is still required for many other purposes.

The key aspect of the optimised algorithm is the fact that the output is generated directly in a compact form (product transitions), which can be used directly in further processing.  The space savings can be exponential, though the worst case remains the same.  The problem of FTA determinisation is inherently intractable.  As in other domains of formal reasoning, such as Boolean functions, great progress can nevertheless be achieved by finding compact representations such as BDDs \cite{DBLP:journals/tc/Bryant86} that work well on a wide range of practical cases.  Product form offers no guarantee of efficiency, but for a wide range of practical cases the compact representation makes the determinisation algorithm far more scalable.

 % end input /Users/jpg/Research/Papers/DFTA-JLAMP/Revision/related.tex
 %
% start input /Users/jpg/Research/Papers/DFTA-JLAMP/Revision/future.tex
\section{Conclusion and Future Work}\label{future}

The contribution of this paper is that it is the first work to our knowledge that shows that determinisation and completion, previously considered almost hopeless cases for anything but very small FTAs, can be performed for a wide range of FTAs arising in practical applications.

There remains interesting work to do both on the algorithm itself and its applications.  Firstly, there seem to be opportunities for optimisation of the critical inner loop of the algorithm generating the DFTA states.  A state can be generated many times, and it seems likely that there are conditions on the elements of the $\Phi$ and $\Psi$ arrays in the algorithm that could be checked in order to avoid this.  The challenge is to simplify the checks sufficiently to make them worthwhile as an optimisation.  Perhaps a completely different representation of the $\Phi$ and $\Psi$ arrays, such as some Boolean encoding, is needed.  We are actively investigating this.

Secondly, we are looking at other applications of the algorithm.  The original motivating application, that of logic program analysis, is still interesting, since Horn clauses (pure logic programs) are increasingly used as a representation language for a variety of other languages and computational formalisms. Essentially the same analysis problems arise in term rewriting systems, where a system state is represented by a term, and an FTA expresses state properties of interest.

 % end input /Users/jpg/Research/Papers/DFTA-JLAMP/Revision/future.tex
 
\section*{Acknowledgements}
We would like to thank Kim Steen Henriksen and Gourinath Banda for discussions in the early stages of this work.

\end{document}